\newtheorem{theorem}{Theorem}
\newtheorem{remark}{Remark}
\def\E{\mathrm{E}}
\def\L{\mathrm{L}}
\def\I{\mathcal{I}}
\title{\centering Landau Analysis in Momentum Space \\ with Massless Particles: an \emph{Amuse Bouche}}
\author{Cristian~Vergu${}^{1, 2}$}%
\emailAdd{c.vergu@gmail.com}
\affiliation{\small $^1$ Institute for Gravitation and the Cosmos, Department of Physics, \\ Pennsylvania State University,\\ University Park, Pennsylvania 16802, USA}
\affiliation{\small $^{2}$ Max-Planck-Institut f\"ur Physik, \\ Werner-Heisenberg-Institut,\\ Boltzmannstr.~8, 85748 Garching, Germany}
\abstract{We illustrate how methods from Landau analysis that have been developed for studying the properties of massive Feynman integrals in momentum space can be generalized to massless integrals.  We consider integrals with both massive and massless propagators in arbitrary dimensions, paying attention to square root branch points. By focusing on a number of well-chosen examples, we show how resolution of singularities (via blow-ups or complex structure deformation) can be used to predict how the behavior of these integrals is modified as different numbers of propagators are chosen to be massless.}
\begin{document}

\maketitle

\section{Introduction}
\label{sec:intro}

One class of Feynman integrals which is better understood than the rest is that whose elements can be written as iterated integrals of $d \log$ forms.  In most usual cases these are polylogarithmic functions (but see ref.~\cite{Duhr:2020gdd} for a counterexample).  For these functions one can use the notion of symbol (see ref.~\cite{Goncharov:2010jf}) to obtain a simpler and (almost) canonical\footnote{The qualifier ``almost'' here means the following.  In practice one can take the symbol entries to be elements of a ring of integers of a field.  If one knows how to do factorization in this ring effectively, then the problem of putting a symbol in a canonical form can be considered solved.  Alas, this is not always easy in practice.  See ref.~\cite{Bourjaily:2019igt}, where one had to compute factorization in terms of non-principal ideals in a number field which was an iterated quadratic extension of high order.  Similar difficulties arise in cluster algebras (see ref.~\cite{Golden:2013xva}) where the process of cluster mutations involves division which is always exact but not always easily computable.} form of the answer.  At any rate, elliptic or Calabi-Yau type integrals are much less understood and we will not discuss them in the following.

It had become clear early on, already with ref.~\cite{Gaiotto:2011dt}, that singularities of integrals are neatly encoded in the symbol.  In the case analyzed in that reference, the first entries of the remainder function were noticed to be ``distances'' of type $x_{i j}^2$ where $x$'s are dual space coordinates.  The study of more general Landau singularities was taken up later in a series of papers~\cite{Dennen:2015bet, Dennen:2016mdk, Prlina:2017azl, Prlina:2017tvx, Chicherin:2025cua}.  More constraints of Steinmann type (see refs.~\cite{Steinmann, Steinmann2}) were used in later papers.  In fact, Steinmann type constraints were used even before in ref.~\cite{Bartels:2009vkz} to argue that the BDS ansatz (ref.~\cite{Bern:2005iz}) needs to be corrected by a remainder function.

The notion of Steinmann compatibility and some heuristic extension of it were used very effectively to constrain ans\"atze of symbols on which other types of constraints were imposed to fully determine the symbol (and ultimately the function).  See refs.~\cite{Caron-Huot:2018dsv, Morales:2022csr, Dixon:2022rse} for a number of results in this direction.

However, the notion of Steinmann compatibility has a vast but little-known generalization which was derived in ref.~\cite{pham} (see also the discussion in ref.~\cite{Hannesdottir:2022xki} and refs.~\cite{pham2011singularities, pham1968singularities} for useful background for ref.~\cite{pham}).  These Steinmann-Pham compatibility conditions (at least for the generic mass case) should tell us which sequential discontinuities can yield non-vanishing answers.  Another constraint arises from the hierarchical principle of ref.~\cite{Landshoff1966} (see also ref.~\cite{pham} and ref.~\cite{Berghoff:2022mqu} for a more recent discussion).

Yet another constraint arises from the study of the behavior near a Landau singularity.  This study was already initiated in ref.~\cite{Landau:1959fi} and later more detailed derivations were done in refs.~\cite{PolkinghorneScreaton, ELOP} and, in parallel, in the mathematical literature in ref.~\cite{BSMF_1959__87__81_0,pham2011singularities}.  In ref.~\cite{Hannesdottir:2021kpd}, the Landau exponent, which controls the behavior near a singularity (and which has a simple expression in terms of the number of loops, the dimension and the number of propagators going on-shell), was shown to be tied to the location of a given singularity in the symbol.

This list of constraints makes one hopeful that a Landau bootstrap, as advocated in ref.~\cite{Hannesdottir:2024hke} should be possible.  Indeed, it is hard to imagine what kind of obstruction could prevent it from being successful for the class of integrals which can be written in an iterated integral form.  This would enable one to build amplitudes or form factors from \emph{first principles}, without recourse to ad-hoc tools like input from integrability (which is only available for planar $\mathcal{N} = 4$ theory).

Indeed, a detailed understanding of singularities is needed (see refs.~\cite{landshoff-olive, PhysRev.125.2139, pham}) in order to know how to do the analytic continuation (which way the iterated integral contour should avoid the singularities of the differential forms occurring in the symbol).

Despite their great promise, the constraints mentioned above have been proved in a restricted setting where no ``fine tuning'' of the parameters is allowed.  Still, such fine tuning does routinely occur in physical situations; in and out particles can have equal masses, etc.  It is less clear what can be said in general about these fine-tuned cases.

In this paper, as a step towards a general theory, we study Landau singularities of Feynman integrals, in various singular, non-generic, kinematic configurations.  One such non-generic kinematic configuration is when we have massless particles.  Integrals with massless particles yield in general simpler answers for the integrals than their massive counterparts, but they give rise to some difficulties, such as infrared divergences, which need to be regulated.

From the point of view of on-shell spaces, which have been a central notion in ref.~\cite{pham, Hannesdottir:2022xki}, one of the complications arising for massless particles is the possibility that the on-shell spaces are singular.  Indeed, the on-shell space of one massless propagator is a light-cone, which has a singularity at the tip of the light-cone.  When the on-shell spaces are singular, the general results of ref.~\cite{pham, pham2011singularities} do not immediately apply.

The best understood case for Landau singularities is called a \emph{simple pinch}, which is such that the $\alpha_e$ and the internal momenta $q_e$ taking part in a pinch are uniquely determined and a certain Hessian matrix is positive definite.  For non-generic kinematics several new possibilities arise.  For example, one may obtain a non-unique pinch, as in the case of the massless bubble.  Even when the critical point is unique, the Hessian matrix obtained by expansion around it may not be definite.  For example, it could have zero eigenvalues.

There are two main ways one can deal with singularities: blow-ups or deformations.  We describe both.  While we describe deformations in more detail, it is likely that blow-ups will lead to a more powerful algorithm for understanding singularities.  The blow-ups make all the on-shell spaces non-singular by replacing light-cones with cylinders where the tip of the light-cone becomes a sphere.  This however changes the integrand, by replacing massless on-shell propagators by products, as described in sec.~\ref{sec:landau-eq-blowups}.  One then has many ways to solve the Landau equations, by taking for each massless propagator a soft, collinear or soft-collinear condition.  There appear to be similarities to the method of regions of ref.~\cite{Beneke:1997zp} and with tropical geometry approaches of ref.~\cite{Arkani-Hamed:2022cqe}, but we will not seek to understand these connections here.

The second way to deal with singularities are (complex structure) deformations.  This can be done in several ways.  One way is to make the massless propagators massive and study the limit where the masses vanish.  Another way, involves doing an expansion in a small parameter which is the distance to the Landau loci.  This works because away from the Landau locus the intersections of on-shell spaces is generically smooth.  A phenomenon which can be conveniently analyzed by this method is what happens when the Hessian matrix around a critical point is itself singular at the Landau locus.  Since we are expanding around a point in the neighborhood of the Landau locus, we can compute the asymptotics of $\det H$ where $H$ is the Hessian.  We check in a few cases that this analysis gives the right answer.

We check the Pham-Steinmann relations in a few cases.  We also check the hierarchical principle, in the rather exotic case of tadpole singularities.  We show that these constraints are very stringent, fixing the final answer uniquely (once we accept the premise that the answer is of polylogarithmic type).  The hierarchical principle also plays a big role, but there is more to be understood when massless particles play an essential role.  We hope to return to this in the future.

In sec.~\ref{sec:second_type_inversion} we describe an approach to second type and mixed second type singularities via inversion, which reduces the analysis to a singularity at infinity to the study of a singularity at the origin.  Second type singularities naturally arise in dimensional regularization, even for integrands which are dual conformal invariant in integer dimensions and the associated singularities are very difficult to understand by other methods (see ref.~\cite{Caron-Huot:2014lda}).  This has also been explored in ref.~\cite{Hannesdottir:2024hke}.

Finally in sec.~\ref{sec:landau-geometric} we present an approach to solving Landau equations based on volumes of various simplices with fixed side lengths.  The simplices can be in Euclidean, Lorentzian or other signatures.  This approach is inspired by that presented in ref.~\cite{Okun:1960cls} in a two-dimensional example.

\paragraph{Notation}
We denote a Feynman integral corresponding to a graph $G$ with edges $e$, each of mass $m_e$ in $D$ spacetime dimensions with
\begin{equation}
    \I_G = \int_h \prod_{\ell=1}^{\L} d^D k_\ell \frac{N}{\prod_{e=1}^{\E} (q_e^2-m_e^2)}
    \label{eq:I_int}
\end{equation}
where the $k_\ell$, with $\ell \in \{1,2, \ldots, \L \}$, form a basis of loop momenta, and $q_e$ is the momentum flowing through edge $e$. The numerator $N$ is a function of the internal momenta $q_e$ and the external momenta $p_e$.  We have denoted by $h$ the contour of integration, which avoids the singularities in the denominators according to the Feynman $i \varepsilon$ prescription.

\section{The Landau Equations with Blow-ups}
\label{sec:landau-eq-blowups}

\subsection{Pinches and permanent pinches}

The Landau equations are easiest to understand for Feynman integrals for which all internal edges are massive. In those cases, we have a number of singular surfaces at the locations $q_e^2-m_e^2=0$, given by the vanishing loci of the denominators of~\eqref{eq:I_int}.

As a simple example, let us look at the first edge, whose corresponding singular surface given by $q_1^2-m^2=0$. Assuming for simplicity that $q_1$ is one of our basis loop momenta, we can draw this surface in the plane of $q_1^0$ and $q_1^1$ as in fig.~\ref{fig:eyelash-massive}.

\begin{figure}
  \centering
  \includegraphics{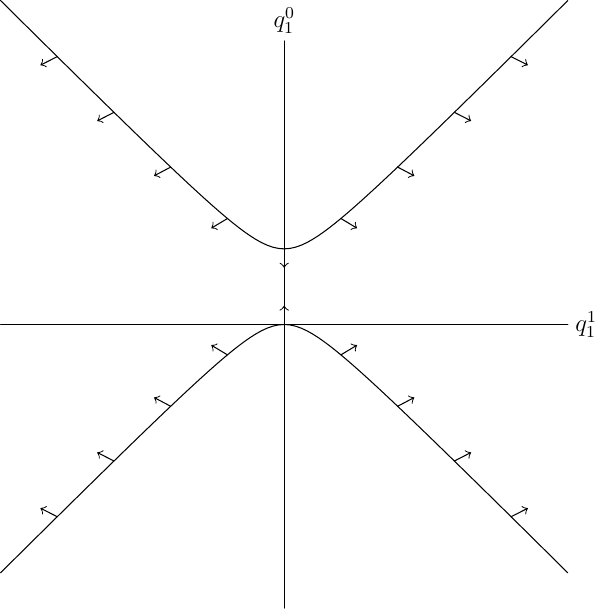}
  \caption{The on-shell surface of a massive particle, together with a choice of complex detour which avoids all the complex points of the singular on-shell hypersurface.}
  \label{fig:eyelash-massive}
\end{figure}

The arrows that we have drawn on the singular surface $q_1^2-m^2=0$ form a vector field that represents the imaginary part we have to add to the contour $h$ to have it avoid the singular surface consistent with the causal $i \varepsilon$ prescription.

A problem with this picture arises already if we take this propagator to be massless. The singular surfaces can then be drawn as in fig.~\ref{fig:eyelash-massless}.  In this case, we see that there is no consistent way of making a continuous deformation of the contour to complex values, which avoids the singularity when $q_1$ becomes soft ($q_1^i=0$ for $i = 0, 1, \ldots D-1$).

\begin{figure}
  \centering
  \includegraphics{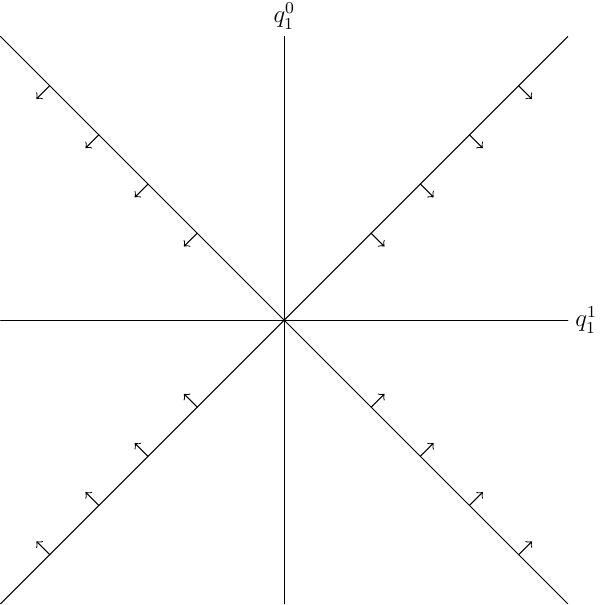}
  \caption{The on-shell surface of a massless particle, together with a choice of complex detour which cannot avoid all the complex points of the singular on-shell hypersurface; there is a pinch at zero momentum.}
  \label{fig:eyelash-massless}
\end{figure}

Pinches which occur for all value of the external kinematics are called \emph{permanent pinches}. When they are present, the integrand needs to vanish sufficiently rapidly so that the integral remains well defined even if the contour passes through the integrand singularity at the pinch. When the integrand does not vanish fast enough, these pinches contribute to an IR divergence and the integral must be properly regulated. Feynman integrals always have permanent pinches at zero momentum for massless particles. Nevertheless, kinematic singularities can be hidden in such propagators, and we may need to make a change of variables to separate the IR divergences from singularities that occur as the external kinematics are varied. In the mathematics literature, such changes of variables are called \emph{blow-ups}. In addition to exposing new singularities, blow-ups can modify the behavior of the integral around the singularity in an asymptotic expansion, as we will discuss below.  Moreover the \emph{hierarchical principle}, which dictates which discontinuities of $\I_G$ are allowed in sequence, has to be reassessed in the presence of massless particles.

We now illustrate how to carry out the blow-up needed to resolve the singularities at zero momentum. We will see that these lead to a new form of the Landau equations, which we solve in two cases---for the bubble singularity with one massless edge, and the sunrise singularity when one of the participating edges is massless.

\subsection{Landau equations for massless particles in loop-momentum space}
\label{sec:landau_massless}

As a warm up, we first consider the bubble integral in $D$ dimensions with one propagator of mass $m \neq 0$ and one massless propagator:
\begin{equation}
    \I_{\text{bub}} = \int \frac {d^D q_1}{q_1^2 (q_2^2 - m^2)},
    \label{eq:bubble}
\end{equation}
with $q_1 + q_2 = p$. We can try to solve the traditional Landau equations, which in this example give the loop equation $\alpha_1 q_1 = \alpha_2 q_2$. When squaring and using the on-shell conditions $q_1^2=0$ and $q_2^2=m^2$, we find $\alpha_1^2 m^2 = 0$, which gives $\alpha_2 = 0$ and thus $q_1 = 0$. 
Notice that this happens for any values of the external kinematics $p$, and is therefore a permanent pinch. It comes about in this case since the Landau singularity and IR divergence of this integral as $q_1 \to 0$ are at the same location in the integration space.

While it is certainly true that the massless bubble integral can be IR divergent (and will be for dimensions $D \leq 2$), it turns out that we have not yet uncovered all the kinematic singularities of this integral. In fact, there is a singularity at $p^2=m^2$ hiding in this corner of integration space as we will see. In order to analyze\footnote{To describe the integration cycle after the blow-up we need to choose a lift to the blown-up space.  We will not describe these choices in detail; one can proceed similarly to ref.~\cite{boyling1968homological} where an analogous construction was done in Feynman parameter space.} it, we do a blow-up around the point at $q_1=0$, which we do by reparametrizing the massless momentum as $q_1 = \rho (1, \vec{y})$. This change of coordinates transforms the light-cone of $q_1$ to a cylinder $\mathbb{R} \times S^{D - 2}$, where $\rho$ is a coordinate on $\mathbb{R}$ and $\vec{y}$ is a coordinate on $S^{D - 2}$.  Indeed, the point at the vertex of the light-cone is replaced by a $D - 2$-sphere given by the equations $\rho = 0$ and $\vec{y}^2 = 1$:

After the blow-up, the integral from~\eqref{eq:bubble} takes the form
\begin{equation}
    \I_{\text{bub}} = \int \frac{\rho^{D-3} d \rho \, d^{D-1} \vec{y}}{(1 - \vec{y}^2) \left[ (p^0 - \rho)^2 - (\vec{p} - \rho \vec{y})^2 - m^2 \right] }.
    \label{eq:massless-bub-blowup}
\end{equation}
This form is better suited for disentangling the IR divergence and the Landau singularity.

For $D = 3$ the differential form we are integrating is a representative of a cohomology class in $H^D(X \setminus S_1 \cup S_2)$, where $X$ is a compactification of $\mathbb{C}^D$, $S_1$ is defined by $\vec{y}^2 = 1$ in $X$ and $S_2$ is defined by $(p^0 - \rho)^2 - (\vec{p} - \rho \vec{y})^2 - m^2 = 0$.  For $D > 3$, we have instead that the differential form is a representative of a relative cohomology class in $H^D(X \setminus S_1 \cup S_2, S_0)$, where $S_0$ is defined by $\rho = 0$.  Finally, for $D < 3$ we have that the differential form is a representative of a cohomology class in $H^D(X \setminus S_1 \cup S_2 \cup S_0)$.

We can rewrite one of the terms in the denominator as
\begin{equation}
  (p^0 - \rho)^2 - (\vec{p} - \rho \vec{y})^2 - m^2 =
  p^2 - m^2 + \rho (-2 p^0 + 2 \vec{p} \cdot \vec{y} + \rho (1 - \vec{y}^2)).
\end{equation}
When $p^2 - m^2 = 0$ we find that the equation factorizes so it has a singularity of type $x y = 0$ (at $x = y = 0$).  This singularity can be resolved as well if we make another change of variable $\rho = (p^2 - m^2) \sigma$.  This produces a global factor of $(p^2 - m^2)^{D - 3}$

To compare the asymptotic expansion of Eq.~\eqref{eq:Ibub_full} when $p^2 - m^2 \to 0$ with the result obtained above we use~\cite[\href{https://dlmf.nist.gov/15.8.E2}{Eq.15.8.2}]{NIST:DLMF} which reads
\begin{equation}
\frac {\sin (\pi (b - a))}\pi {}_2 F_1(a, b; c; z) =
\frac {(-z)^{-a}}{\Gamma(b) \Gamma(c-a)} {}_2 F_1(a, a - c + 1; a - b + 1; z^{-1}) - (a \leftrightarrow b).
\end{equation}
We want to use this for $a = 2 - \frac D 2$, $b = \frac {D - 2} 2$, $c = \frac D 2$, $b - a = D - 3$, $a - c + 1 = 3 - D$, $a - b + 1 = 4 - D$, $b - c + 1 = 0$, $b - a + 1 = D - 2$.

Since
\begin{equation}
{}_2 F_1(a, b; c; z) = 1 + \frac {a b} c z + \frac {a (a + 1) b (b + 1)}{c (c + 1)} \frac {z^2}{2!} + \cdots,
\end{equation}
we have that ${}_2 F_1(a, b; c; w) \sim 1$ as $w \to 0$ (as long as $c$ is not a negative integer).

The two power contributions to the bubble are
\begin{gather}
(m^2 - p^2)^{\frac {D - 4} 2} \bigl(\frac {p^2}{m^2 - p^2}\bigr)^{-2 + \frac D 2} \to (m^2 - p^2)^{0}, \\
(m^2 - p^2)^{\frac {D - 4} 2} \bigl(\frac {p^2}{m^2 - p^2}\bigr)^{-\frac {D-2} 2} \to (m^2 - p^2)^{D - 3}.
\end{gather}
This matches perfectly the analysis above.

The blown-up integration form is also better suited to doing an analysis along the lines of ref.~\cite{Hannesdottir:2022xki}.  There, the Cutkosky formula arose by using Leray's higher-dimensional generalization of Cauchy's formula, which involves constructing a Leray coboundary around an on-shell locus.  However, a Leray coboundary can only be constructed around non-singular points, while the zero momentum of a massless particle is at a singular point of its on-shell space.

As another example, consider the sunrise integral with one massless propagator $m_2 = 0$.  Solving the Landau equations
\begin{equation}
  \alpha_1 q_1 = \alpha_2 q_2 = \alpha_3 q_3,
\end{equation}
together with the on-shell conditions $q_1^2 = m_1^2$, $q_2^2 = 0$, $q_3^2 = m_3^2$, we find $\alpha_1^2 m_1^2 = 0 = \alpha_3^2 m_3^2$.  Therefore, assuming $m_1 \neq 0 \neq m_3$, we have $\alpha_1 = \alpha_3 = 0$ and therefore $\alpha_2 = 1$.  Plugging back into the original equations we find $q_2 = 0$.  This means that the values of the loop momenta for which the Landau singularity occurs are at a singular point in the on-shell space; at the vertex of the light-cone.

We now try to resolve this singularity by blow-up.  We set $q_2 = \rho (1, \vec{y})$ and $\alpha_1 = \rho \beta_1$, $\alpha_3 = \rho \beta_3$.  Here we have in mind that $\rho \to 0$ at the singularity.  In these new coordinates, and factoring out the factor of $\rho$ (as required to compute what is known as the strict transform in mathematical language), we find that the Landau equations become
\begin{equation}
  \beta_1 q_1 = \alpha_2 (1, \vec{y}) = \beta_3 q_3.
\end{equation}

We will try to re-derive the Landau equations by doing a change of variable in the Feynman-parametrized integral inspired by the blow-up change of coordinates above.  The original integral is
\begin{equation}
  2 \int \frac {d^D q_1 d^D q_2 d \alpha_1 d \alpha_3}{\bigl[\alpha_1 (q_1^2 - m_1^2) + \alpha_2 q_2^2 + \alpha_3 (q_3^2 - m_3)^2\bigr]^3}.
\end{equation}
Changing variables to $\alpha_1 = \rho \beta_1$, $\alpha_3 = \rho \beta_3$, $q_2 = \rho (1, \vec{y})$, we find
\begin{equation}
  2 \int \frac {d^D q_1 \rho^{D - 1} d \rho d^{D - 1} \vec{y} \rho^2 d \beta_1 d \beta_3}{\rho^3 \bigl[\beta_1 (q_1^2 - m_1^2) + \alpha_2 \rho (1 - \vec{y}^2) + \beta_3 ((p - q_1 - \rho (1, \vec{y}))^2 - m_3^2)\bigr]^3},
\end{equation}
where $\alpha_2 = 1 - \rho (\beta_1 + \beta_3)$.  Note that our choice of change of coordinates allows us to pull out a cubic factor from the denominator.

We can now apply the usual procedure to the denominator.  We take derivatives with respect to $q_1$, $\rho$, $\vec{y}$ and also $\beta_1$, $\beta_3$ and set them to zero and solve the resulting equations.

We find, for the derivatives with respect to $\beta_1$, $\beta_3$, $q_1$, $\rho$, $\vec{y}$, respectively:
\begin{gather}
  q_1^2 - m_1^2 = \rho^2 (1 - \vec{y}^2), \\
  q_3^2 - m_3^2 = \rho^2 (1 - \vec{y}^2), \\
  \beta_1 q_1 + \beta_3 (q_1 - p + \rho (1, \vec{y})) = 0, \\
  (1 - 2 \rho (\beta_1 + \beta_3)) (1 - \vec{y}^2) + 2 \beta_3 (1, \vec{y}) \cdot (q_1 - p + \rho (1, \vec{y})) = 0, \\
  -(1 - \rho (\beta_1 + \beta_3)) \vec{y} + \beta_3 (\vec{p} - \vec{q}_1 - \rho \vec{y}) = 0.
\end{gather}
From the vector components of the second equation above and the last equation we find
\begin{gather}
  \vec{q}_1 + \frac {1 - \rho \beta_1}{\beta_3} \vec{y} = \vec{p}, \\
  \bigl(1 + \frac {\beta_1}{\beta_3}\bigr) \vec{q}_1 + \rho \vec{y} = \vec{p},
\end{gather}
we find
\begin{gather}
  \vec{q}_1 = \vec{p} \beta_3 \frac{(\beta_1 + \beta_3) \rho - 1}{{\left(\beta_1^2 + \beta_1 \beta_3 + \beta_3^2\right)} \rho - \beta_1 - \beta_3} = \vec{p} \Bigl[\frac {\beta_3}{\beta_1 + \beta_3} - \rho \frac {\beta_1 \beta_3^2}{(\beta_1 + \beta_3)^2} + \mathcal{O}(\rho^2)\Bigr], \\
  \vec{y} = -\frac{\beta_1 \beta_3 \vec{p}}{(\beta_1^2 + \beta_1 \beta_3 + \beta_3^2) \rho - \beta_1 - \beta_3} = \vec{p} \Bigl[\frac {\beta_1 \beta_3}{\beta_1 + \beta_3} + \mathcal{O}(\rho)\Bigr], \\
  \vec{q}_3 = \vec{p} \beta_1 \frac{(\beta_1 + \beta_3) \rho - 1}{(\beta_1^2 + \beta_1 \beta_3 + \beta_3^2) \rho - \beta_1 - \beta_3} = \vec{p} \Bigl[\frac {\beta_1}{\beta_1 + \beta_3} - \rho \frac {\beta_1^2 \beta_3}{(\beta_1 + \beta_3)^2} + \mathcal{O}(\rho^2)\Bigr].
\end{gather}
From the zeroth component of the third equation we find
\begin{gather}
  q_1^0 = \frac {\beta_3}{\beta_1 + \beta_3} (p^0 - \rho), \qquad
  q_3^0 = \frac {\beta_1}{\beta_1 + \beta_3} (p^0 - \rho),
\end{gather}
where we have used momentum conservation for the second equation.  At this point we have solved for the loop momenta in terms of $\rho$, $\beta_1$ and $\beta_3$.  Notice that while $\vec{q}_1$, $\vec{y}$ and $\vec{q}_3$ are proportional to $\vec{p}$, this is not true for the zeroth components.

Next, we use the first two and the fourth equation to solve for the remaining unknowns.  Unfortunately the equations are pretty complicated.\footnote{See ref.~\cite{Landshoff1966} for a blow-up approach in Feynman parameter space, where one also encounters complications in solving the resulting equations.}  However, without solving the equations we have that
\begin{equation}
  \beta_1 (q_1^2 - m_1^2) + \alpha_2 \rho (1 - \vec{y}^2) + \beta_3 (q_3^2 - m_3^2) =
  \rho (1 - \vec{y}^2),
\end{equation}
where we have used $q_1^2 - m_1^2 = \rho^2 (1 - \vec{y}^2) = q_2^2 - m_2^2$ and $\alpha_2 = 1 - \rho (\beta_1 + \beta_3)$.  In this case the exponent of $\rho$ in the integrand is $\rho^{D - 5}$, so the correct answer can not be obtained without an analysis of the Hessian.  We leave this for future work.

\subsection{Alternative Parametrization of the Blow-Up}
\label{sec:conifold}

Above, we have encountered the singularity at the tip of a light-cone and we have resolved it in an ad-hoc way.  In this section we present an alternative parametrization of the blow-up, which is in a sense more covariant and which points to a potential connection to momentum twistors.

The most intuitive way to resolve the massless singularity is to make the massless propagators massive and expect that the massless limit can be controlled and it gives sensible answers.  Another way is to blow it up.  These two approaches lead to \emph{different} results for the resolved space, as described below.

Ref.~\cite[sec.~6.6]{MR2003030} is a source for the discussion in this section.  The complexified on-shell equation for a massless particle in four space-time dimensions can be written as
\begin{equation}
A^2 + B^2 + C^2 + D^2 = 0,
\end{equation}
where $A, B, C, D \in \mathbb{C}$.  Since we complexify, the signature of the metric is not a good notion anymore.

If we introduce new coordinates
\begin{equation}
X = A + i B, \qquad
Y = A - i B, \qquad
U = D + i C, \qquad
V = -D + i C,
\end{equation}
then the equation for the massless on-shell space becomes $X Y - U V = 0$,  or
\begin{equation}
\det \begin{pmatrix}
X & U \\
V & Y
\end{pmatrix} = 0.
\end{equation}

In terms of real coordinates
\begin{equation}
\vec{x} = (\Re A, \Re B, \Re C, \Re D), \qquad
\vec{y} = (\Im A, \Im B, \Im C, \Im D),
\end{equation}
we have the equations $(\vec{x} + i \vec{y})^2 = 0$ or, in real and imaginary parts
\begin{equation}
\vec{x}^2 - \vec{y}^2 = 0, \qquad
\vec{x} \cdot \vec{y} = 0.
\end{equation}
If we set $\vec{x}^2 = r^2$ then, at fixed $r$ we have that $\vec{x}$ belongs to an $S^3$, while $\vec{y}$, being orthogonal to $\vec{x}$, belongs to an $S^2$.  If we let $r$ range over $r \in [0, \infty)$, we obtain a cone over $S^3 \times S^2$.  This is why this is called a \emph{conifold singularity}.

If we add a mass, we obtain the equation
\begin{equation}
A^2 + B^2 + C^2 + D^2 = m^2.
\end{equation}
If the mass is real, then we have
\begin{equation}
\vec{x}^2 - \vec{y}^2 = m^2, \qquad
\vec{x} \cdot \vec{y} = 0.
\end{equation}
Now we can introduce $r^2 = \vec{y}^2$ and $\vec{x}^2 = r^2 + m^2$ and $r \in [0, \infty)$.  At fixed $r$ we again have a $S^3 \times S^2$ but at $r = 0$ the $S^2$ sphere shrinks to zero size while the $S^3$ has minimal radius $m$.  So this complex structure deformation has replaced the singular tip of the conifold by an $S^3$.

If we don't want to introduce any masses, then an option is to use a blow-up.  It is best to use the coordinates $X, Y, U, V$ for this purpose.  As should be very familiar, when $X Y - U V = 0$, then we can find $\lambda_1, \lambda_2 \in \mathbb{C}$, not both vanishing, such that
\begin{equation}
\begin{pmatrix}
X & U \\
V & Y
\end{pmatrix}
\begin{pmatrix}
\lambda_1 \\ \lambda_2
\end{pmatrix}  = 0.
\end{equation}
As long as not all the coordinates $X, Y, U, V$ vanish, then there is a unique solution for $(\lambda_1 \colon \lambda_2) \in \mathbb{P}^1$, as a coordinate on a projective line.  But if $X, Y, U, V$ vanish, then the $\lambda$ are unconstrained.  This means that the blow-up has added a (exceptional divisor) $\mathbb{P}^1$ at the origin.

There is also a blow-up by action on the left, instead of the action on the right described above.  These equations and the choice of ``chirality'' inherent in this parametrization of the blow-up bear a striking similarity to momentum twistor constructions.  We leave a more detailed investigation of this connection for the future.

We note that the blow-up added a $S^2$, which is a complex manifold while the mass (or complex structure) deformation added an $S^3$ which is not.  This could be useful when studying the integration space by stratification (see ref.~\cite{pham2011singularities}).

\section{Asymptotic Analysis with Massless Particles}
\label{sec:asymptotic}

In this section, we are interested in the expansion of Feynman integrals around Landau singularities. For a singularity at $\varphi=0$ (we could e.g.\ take~$\varphi=p^2-m^2$ in the example from Sec.~\ref{sec:landau_massless}), we expect the leading non-analytic term in an expansion around $\varphi=0$ to take the form
\begin{equation}
    \lim_{\varphi \to 0} \I \sim C \varphi^a \log^b \varphi + \ldots \,.
    \label{eq:expansion}
\end{equation}
Here, $C$ is a constant, and the $\ldots$ denote terms are either higher order in $\varphi$, or are non-singular as $\varphi \to 0$. Note in particular that the term displayed on the right-hand side of~\eqref{eq:expansion} does not need to be the leading term in a $\varphi \to 0$ expansion -- it is only the leading non-analytic term. We call $a$ the \emph{Landau exponent} corresponding to the Feynman integral $\I$.

\subsection{Review: Expansions around generic-mass Landau varieties}
\label{sec:generic_asymptotics}

In the case of Feynman integrals corresponding to one-vertex irreducible Feynman diagrams with sufficiently generic masses, the expansion has long been known to be~\cite{Landau:1959fi,pham2011singularities}
\begin{equation}
    \lim_{\varphi \to 0}\I \sim \begin{cases}
   C  \varphi^\gamma \log \varphi   & \text{if } \gamma \in \mathbb{Z}, \gamma \ge 0 \\ 
   C \varphi^\gamma & \text{otherwise.}
    \end{cases}
\label{eq:I_expansion}
\end{equation}
where $\gamma$ depends on the topology of the diagram and the spacetime dimension $D$.
In other words, we have that $a=\gamma$ and, additionally, $b=1$ if $\gamma$ is a positive integer or zero, and otherwise $b=0$.

There are multiple ways to derive the formula in~\eqref{eq:I_expansion}. One can, work directly in loop-momentum space~\cite{pham1968singularities}, directly in Feynman-parameter space~\cite{PolkinghorneScreaton,Hannesdottir:2022bmo}, or in a mixed representation with Feynman parameters and loop momenta as in Landau's original derivation~\cite{Landau:1959fi}. We take a generic scalar Feynman integral, i.e.~\eqref{eq:I_int} with  with $N=1$. The Landau equations read
\begin{subequations}
    \begin{align}
    \alpha_e (q_e^2-m_e^2) & = 0 \,, \qquad \text{for } e \in \{1,2, \ldots, \E \} \,,
    \label{eq:loop_landau_1}
    \\
    \sum_{\text{loop}} \alpha_e q_e^\mu & = 0 \,
    \qquad \text{for every loop.}
    \label{eq:loop_landau_2}
\end{align}
\end{subequations}
We assume that $\varphi=0$ is a solution of these equations for which the corresponding points in loop momentum space where the pinch occurs are at $\alpha_e = \alpha_e^\ast$ for every $e$ and $k_{\ell} = k_{\ell}^\ast$ for every $\ell \in \mathcal{D}$, and furthermore
\begin{subequations}
    \begin{align}
    q_e^2 - m_e^2 & = 0 \qquad \text{for } e \in \mathcal{C} \,, \\
    q_e^2 - m_e^2 = (q_e^\ast)^2 - m_e^2 & \neq 0 \qquad \text{for } e \not\in \mathcal{C}
\end{align}
\end{subequations}
with $\mathcal{C}$ being a the set of the edges that is defined by these equations. We assume that the remaining loop momenta $k_\ell \not\in \mathcal{D}$ are not fixed by the Landau equations and must instead be integrated over. We define $l=|\mathcal{D}|$ to be the number of loop-momenta that are determined by the Landau equations.

Before we continue, let us emphasize that we have made an important distinction between the denominator factors $q_e^2-m_e^2$ that are set to zero on the singularity and those that are not. Looking back at~\eqref{eq:loop_landau_1}, we could say that we have solved the Landau equations setting $\alpha_{e'}=0$ for $e' \not\in \mathcal{C}$. In summary, we have
\begin{itemize}
    \item $l D$ coordinates that are pinched at the singularity $\varphi=0$
    \begin{equation}
        k_\ell \qquad \text{for $\ell \in \mathcal{D}$}
    \end{equation}
    \item $n$ surfaces that are singular at the singularity $\varphi=0$:
    \begin{equation}
        q_e^2-m_e^2=0 \qquad \text{for $e \in \mathcal{C}$}
    \end{equation}
\end{itemize}

However, as we will see shortly, it is more useful for us to Feynman-parametrize the integral $\I$ omitting the parameters that are not in $\mathcal{C}$ and write
\begin{equation}
    \I = \int_0^\infty \prod_{e \in \mathcal{C}} d \alpha_e \, \delta \big(1-\sum_{e \in \mathcal{C}} \alpha_e\big) \int_h \prod_{\ell=1}^{\L} d^D k_{\ell} \frac{(n - 1)!}{\left[ \sum_{e \in \mathcal{C}} \alpha_e (q_e^2-m_e^2) \right]^{n}} \frac{1}{\prod_{e \not\in \mathcal{C}} (q_e^2-m_e^2)} \,.
\end{equation}
Here we have denoted the number of on-shell edges with $n \equiv |\mathcal{C}|$. In this formulation, the singular denominator has an expansion of the form
\begin{equation}
    \sum_{e\in \mathcal{C}} \alpha_e (q_e^2-m_e^2)
    =
    \varphi + \frac 1 2 \sum_{\alpha,\beta = 1}^{m} H_{\alpha,\beta} x^{\alpha} x^{\beta}  + \ldots \,,
\end{equation}
where $x^i$ are components of the vector $\vec{x}$ which collects the $m \equiv l D+n-1$ coordinates corresponding to $\delta k_\ell=k_\ell -k_\ell^\ast$ and $\delta \alpha_e = \alpha_e - \alpha_e^\ast$ that are determined by the Landau-equation solution. The Hessian is explicitly given by
\begin{equation}
    H_{\alpha,\beta} = \frac{\partial}{\partial x_\alpha} \frac{\partial}{\partial x_\beta} \sum_{e \in \mathcal{C}} \alpha_e (q_e^2-m_e^2) \Bigg\vert_{\varphi=0, \vec{x}=0} \,.
\end{equation}
The expansion is valid close to $\vec{x}=\vec{0}$.  Higher order terms yield subleading contributions.  The leading contribution to the integral is then\footnote{This assumes that all the $\alpha_e^* > 0$.  If some of them vanish, the corresponding integral is over an interval $[0, \delta)$ instead.}
\begin{equation}
    \I \sim \int_{-\delta}^{\delta} d^{m} \vec{x}
    \int \prod_{\ell \not\in \mathcal{D}} d^D k_{\ell}
    \frac{(n - 1)!}{\left[\varphi + \frac 1 2 \sum_{\alpha,\beta = 1}^{m} H_{\alpha,\beta} x^{\alpha} x^{\beta}\right]^n} \frac{1}{\prod_{e \not\in \mathcal{C}} [(q_e^\ast)^2-m_e^2]} \,.
\end{equation}
We can easily perform the integral over $\vec{x}$, under the assumptions that the quadric in $\vec{x}$ can be diagonalized, i.e. $\det H_{\alpha \beta} \neq 0$, and that the remaining integrals do not strengthen or weaken the singularity. Then, we get
\begin{equation}
    \mathcal{I} \sim C \varphi^{\gamma} \times
    \begin{cases}
    \varphi^{-\gamma} + \varphi^{-\gamma+1} + \cdots + \varphi^{-1}
    +
    \log \varphi + \cdots & \text{if } \gamma \in \mathbb{Z}_+,
    \\ \log \varphi & \text{if } \gamma = 0 \\ 
    1 + \cdots & \text{otherwise,}
    \end{cases}
    \label{eq:Landau_exp_derivation}
\end{equation}
where $C$ involves the remaining integrations over $k_\ell$ for $\ell \not \in \mathcal{D}$. Going back to~\eqref{eq:I_expansion},we see that~\eqref{eq:Landau_exp_derivation} is consistent provided that we identify
\begin{equation}
    \gamma = \frac{l D -n-1}{2} \,,
    \label{eq:Landau_exponent_gamma}
\end{equation}
where $l$ is the number of loop momenta that are pinched and $n$ is the number of on-shell propagators in the Landau-equation solution for the singularity at $\varphi=0$.

When massless particles are involved, however, the generic-mass formula from~\eqref{eq:I_expansion} and~\eqref{eq:Landau_exponent_gamma} does not always apply. As a simple example, the singularity of the one-mass bubble integral from~\eqref{eq:Ibub_full} at $p^2-m^2=0$ has $\gamma=\frac{D-4}{2}$, while na\"ively applying~\eqref{eq:Landau_exponent_gamma} would incorrectly predict that $\gamma=\frac{D-3}{2}$. This discrepancy is of course entirely expected: the Eq.~\eqref{eq:Landau_exponent_gamma} was derived under the assumptions that the kinematics and masses were sufficiently generic.
However, an analogous derivation goes through if we take into account the resolution of singularities needed to obtain a simple pinch.  We now explore this method in some examples.

\subsection{Massless bubble}
\label{sec:massless-bubble}

In this section we compute the Landau exponents for the massless bubble.
The idea involves going below the threshold and doing the analysis of Polkinghorne \& Screaton (see ref.~\cite{PolkinghorneScreaton}) then noticing that the Hessian itself contributes a singular factor.  This is how the non-simple pinch nature of the singularity manifests itself in this approach to the problem.

We present a detailed calculation in the massless bubble case, but the same method should work in all cases where the Hessian determinant is not exactly zero.

First, let us solve the usual Landau equations.  We have $p = q_1 + q_2$, the Landau loop equation $\alpha_1 q_1 = \alpha_2 q_2$ and the on-shell equations $q_1^2 = q_2^2$.  So we get
\begin{gather}
q_1 = \frac {\alpha_2}{\alpha_1 + \alpha_2} p, \qquad
q_2 = \frac {\alpha_1}{\alpha_1 + \alpha_2} p.
\end{gather}
This implies $p^2 = 0$, but $\alpha_1, \alpha_2$ are not determined.  We do not have a simple pinch.

Now let us repeat this analysis using the Polkinghorne \& Screaton idea.  We take $p^2 = -\epsilon$ for $\epsilon > 0$.  The original integral can be written in a Feynman-parametrized form
\begin{equation}
\int \frac {d^d q_1}{q_1^2 q_2^2} = \int \frac {d \alpha_1 d^d q_1}{\Bigl(\alpha_1 q_1^2 + (1 - \alpha_1) q_2^2\Bigr)^2}.
\end{equation}
Next we analyze the function
\begin{equation}
F(q_1, \alpha_1) = \alpha_1 q_1^2 + (1 - \alpha_1) (p - q_1)^2.
\end{equation}
Its stationary points conditions are
\begin{gather}
0 = \frac {\partial F}{\partial \alpha_1} = q_1^2 - (p - q_1)^2, \\
0 = \frac {\partial F}{\partial q_1} = 2 q_1 - 2 (1 - \alpha_1) p.
\end{gather}
Therefore, we have $q_1^* = (1 - \alpha_1) p$, $q_2^* = \alpha_1 p$.

Using $q_1^2 - q_2^2 = 0$ we find $(1 - 2 \alpha_1) p^2 = 0$.  Since $p^2 = -\epsilon \neq 0$ we have $\alpha_1^* = \alpha_2^* = \frac 1 2$.  This is a \emph{different} solution than the one obtained by solving the usual Landau equations.  In particular, it \emph{is} a simple pinch.

At the stationary point $(q_1, \alpha_1) = (q_1^*, \alpha_1^*)$ we have
\begin{equation}
F^* = F(q_1^*, \alpha_1^*) = \frac {p^2} 4 = -\frac \epsilon 4.
\end{equation}

Next we expand $F$ around the stationary point.  We have
\begin{equation}
F(q_1, \alpha_1) = -\frac \epsilon 4 + (q_1 - q_1^*)^2 + p \cdot (q_1 - q_1^*) (\alpha_1 - \alpha_1^*) + \cdots,
\end{equation}
where the second order terms arise from the following $(D + 1) \times (D + 1)$ Hessian matrix
\begin{equation}
H = \begin{pmatrix}
\eta_{\mu \nu} & \frac {p_\mu} 2 \\
\frac {p_\nu} 2 & 0
\end{pmatrix}.
\end{equation}

We can compute $\det H$ by dotting the first $D$ rows with $\frac {p_\mu} 2$ and subtracting the answer from the last row.  This cancels the lower-left off-diagonal term rendering the Hessian matrix block upper diagonal.  The determinant is $(-1)^D \frac {p^2} 4 = (-1)^{D-1} \frac \epsilon 4$ in Lorentzian mostly minus signature.  When $\epsilon \to 0$ this Hessian becomes degenerate, but we are doing this study at small but non-vanishing $\epsilon$.

Changing coordinates to
\begin{gather}
\alpha_1 = \alpha_1^* + \sqrt{\epsilon} \beta, \\
q_1 = q_1^* + \sqrt{\epsilon} \xi,
\end{gather}
we find the integral
\begin{equation}
\int \frac {\epsilon^{\frac {D + 1} 2} d \beta d^d \xi}{\Bigl(-\frac \epsilon 2 + \epsilon \xi^2 + \epsilon p \cdot \xi \beta\Bigr)^2}.
\end{equation}
In the limit $\epsilon \to 0$ and dropping numerical factors, this integral behaves as
\begin{equation}
\frac {\epsilon^{\frac {D - 3} 2}}{\sqrt{\epsilon}} = \epsilon^{\frac {D - 4} 2},
\end{equation}
where the square root arises from the square root of the determinant of the Hessian matrix (the analysis if the same as in ref.~\cite{Hannesdottir:2024hke}).

An important characteristic of this mechanism is that it always shifts the Landau exponent \emph{down}.  That is, we are never in danger of exceeding the transcendentality barrier of $\lfloor \frac D 2\rfloor$.

For references, the massless bubble integral in $D$ dimensions is given by
\begin{equation}
    I_{\text{bub}}= \Gamma \left( 2-\frac{D}{2} \right) \int_0^1 d \alpha \frac{1}{[s \alpha (1-\alpha)]^{2-D/2}}
    = \frac{\sqrt{\pi } 2^{3-D} s^{\frac{D-4}{2}} \Gamma \left(2-\frac{D}{2}\right) \Gamma \left(\frac{D}{2}-1\right)}{\Gamma \left(\frac{D-1}{2}\right)} \,.
\end{equation}
Here we have used the Euler Beta integral $\int_0^1 x^{p - 1} (1 - x)^{q - 1} d x = B(p, q)$, with $B(p, q) = \frac {\Gamma(p) \Gamma(q)}{\Gamma(p + q)}$.  Then, we have applied the duplication formula $\Gamma(z) \Gamma(z + \frac 1 2) = 2^{1 - 2 z} \sqrt{\pi} \Gamma(2 z)$ for $z = \frac D 2 - 1$.

The expansion around 2, 3 and 4 dimensions (setting $D=4-2\epsilon$ and expanding around $\epsilon=1$, $\epsilon=1/2$ and $\epsilon=0$, respectively), gives
\begin{align}
    I_{\text{bub}}^{2D} & = \frac{2}{s (\epsilon -1)}-\frac{2 (\log (-s)+\gamma )}{s}+O(\epsilon - 1), \\
    I_{\text{bub}}^{3D} & = \frac{\pi ^{3/2}}{\sqrt{-s}}+O\left(\epsilon -\frac{1}{2}\right), \\
    I_{\text{bub}}^{4D} & = \frac{1}{\epsilon }+\left(\log \left(-\frac{4}{s}\right)+\psi ^{(0)}\left(\frac{3}{2}\right)\right)+O\left(\epsilon\right) \,.
\end{align}

\subsection{One-mass bubbles}

Next, we look at the one-mass bubble integral where we set the mass to 1 for simplicity,
\begin{multline}
  \label{eq:Ibub_full}
    I_{\text{bub, 1m}}= \Gamma \left( 2-\frac{D}{2} \right) \int_0^1 d \alpha \frac{1}{[s \alpha (1-\alpha)-\alpha]^{2-D/2}}
    \\ = \frac{2 (1-s)^{\frac{D-4}{2}} \Gamma \left(2-\frac{D}{2}\right) \, _2F_1\left(2-\frac{D}{2},\frac{D-2}{2};\frac{D}{2};\frac{s}{s-1}\right)}{D-2} \,.
\end{multline}
The expansions are now given by
\begin{align}
    I_{\text{bub, 1m}}^{2D} & = \frac{1}{(s-1) (\epsilon -1)}+\frac{\log \left(\frac{1}{(s-1)^2}\right)-\gamma }{s-1}+O(\epsilon -1) \label{eq:one-mass-expansion-2d}, \\
    I_{\text{bub, 1m}}^{4D} & = \frac{1}{\epsilon }+\left(\left(\frac{1}{s}-1\right) \log (1-s)-\gamma +2\right)+O\left(\epsilon\right) \label{eq:one-mass-expansion-4d}\,.
\end{align}

This indicates that the asymptotic expansion around $s = 1$ should be $(s-1)^{D - 3}$.  This is very different from the case of two massless particles, where it was $s^{\frac {D - 4} 2}$.  Let us see how it can be reproduced by our analysis.

We have $F(q_1, \alpha_1) = \alpha_1 q_1^2 + \alpha_2 (q_2^2 - m_2^2)$, where $\alpha_2 = 1 - \alpha_1$ and $q_2 = p - q_1$.  The stationary point conditions read
\begin{gather}
  \frac {\partial F}{\partial q_1} = 2 q_1 - 2 \alpha_2 p, \\
  \frac {\partial F}{\partial \alpha_1} = q_1^2 - (q_2^2 - m_2^2).
\end{gather}
We set $q_1^2 = q_2^2 - m_2^2 = \epsilon$.  We obtain $q_1 = \alpha_2 p$, $q_2 = \alpha_1 p$ and
\begin{gather}
  \sqrt{\frac {q_1^2}{p^2}} = \sqrt{\frac {\epsilon}{p^2}} = \alpha_2, \qquad
  \sqrt{\frac {q_2^2}{p^2}} = \sqrt{\frac {\epsilon + m_2^2}{p^2}} = \alpha_1.
\end{gather}
Then, $p^2 = (\sqrt{\epsilon} + \sqrt{\epsilon + m_2^2})^2 = m_2^2 + 2 m \sqrt{\epsilon} + \mathcal{O}(\epsilon^2)$.  We can now express $\epsilon$ in terms of external kinematics
\begin{equation}
  \epsilon \sim \Bigl(\frac {p^2 - m_2^2}{2 m_2}\Bigr)^2.
\end{equation}

We compute the Hessian matrix
\begin{gather}
  \frac {\partial^2 F}{\partial q_1 \partial q_1} = 2 \eta, \qquad
  \frac {\partial^2 F}{\partial q_1 \partial \alpha_1} = 2 p,
\end{gather}
which is
\begin{equation}
  H =
  \begin{pmatrix}
    \eta_{\mu \nu} & 2 p_\mu \\
    2 p_\nu & 0
  \end{pmatrix}.
\end{equation}
We have $\det H = (-1)^{D - 1} 4 p^2$.  Unlike in the massless case above it is non-singular so it will not contribute to the shift in the Landau exponent.

Using, as before, the scaling
\begin{gather}
  \alpha_1 = \alpha_1^* + \sqrt{\epsilon} \beta, \qquad
  q_1 = q_1^* + \sqrt{\epsilon} \xi,
\end{gather}
we obtain that the integral behaves as $\epsilon^{\frac {D - 3} 2}$.  However now we have $(p^2 - m_2^2) \sim \sqrt{\epsilon}$ so the behavior is the expected one $(p^2 - m_2^2)^{D - 3}$.  In contrast, in the massless case we had $p^2 \sim \epsilon$.

\subsection{Bubble integral in two dimensions}
\label{sec:massive-bubble-2D}

In this section we comment on how the hierarchy of singularities and the Pham compatibility conditions constrain the answer for the bubble integral.  A related discussion can be found in ref.~\cite{Hannesdottir:2024hke}.

The bubble integral fits in a Pham diagram (see fig.~\ref{fig:bubble-pham-web}) as follows: we will depict it as an elementary vertex.  There are two Landau diagrams that contract to this vertex, which are two tadpoles.  And finally there is a bubble diagram which contracts both to the vertex and to each of the tadpoles.

\begin{figure}
  \centering
  \includegraphics{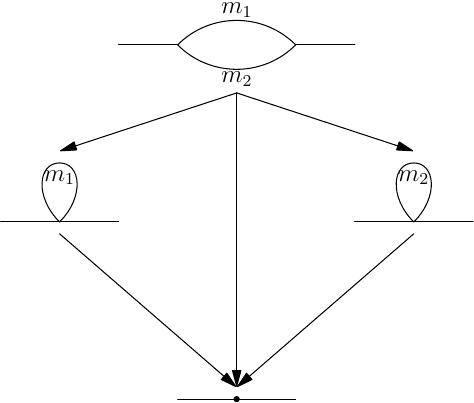}
  \caption{Hierarchy of singularities for a bubble integral.  The dot at the bottom represents the bubble integral itself, while the rest of the diagrams display the cut propagators.}
  \label{fig:bubble-pham-web}
\end{figure}

The tadpole corresponding to the mass $m_1$ has a singularity at $m_1 = 0$ and similarly for $m_2$.  These are logarithmic singularities in two dimensions.  Next, the bubble corresponds to a singularity at $s - (m_1 \pm m_2)^2 = 0$ and it is a square root singularity in two dimensions.

The two tadpoles are not hierarchical to each-other, nor Pham compatible, so it should not be possible to have iterated $m_1 = 0$ followed by $m_2 = 0$ singularities.  Therefore, as a first approximation we have $\log (m_1 m_2)$ which reproduces the logarithmic singularities from the tadpoles.  It is not possible to have $\log m_1 \log m_2$ since then it would be possible to take a monodromy around $m_1 = 0$ followed by a monodromy around $m_2 = 0$ and obtain a non-vanishing result.  It also exceeds the expected transcendentality bound.

However, this does not reproduce the square root singularities.  Let us change the argument of the logarithm to an expression $\frac {P + \sqrt{Q}}{P - \sqrt{Q}}$, where $Q = (s - (m_1 - m_2)^2) (s - (m_1 + m_2)^2)$ and $P$ is determined by the requirement that $P^2 - Q \propto (m_1 m_2)^n$.  By dimensional analysis $n = 2$.  Then it is clear that $P = s - (m_1^2 + m_2^2)$.  And $P^2 - Q = 4 m_1^2 m_2^2$.

Finally, we have $\log \frac {P + \sqrt{Q}}{P - \sqrt{Q}}$ as a candidate answer, but should we pick plus or minus in the numerator?  Also the answer is not dimensionally correct.  To solve both of these problems we take
\begin{equation}
\frac 1 {\sqrt{Q}} \log \frac {P + \sqrt{Q}}{P - \sqrt{Q}},
\end{equation}
for $Q = (s - (m_1 - m_2)^2) (s - (m_1 + m_2)^2)$ and $P = s - (m_1^2 + m_2^2)$.  There are some factors of $\pi$ and other numerical factors which are not fixed yet, but can be fixed from the computation of the maximal cut.

This form of the answer still has some problems.  For example, in this form the threshold and pseudo-threshold occur on an equal footing.  Another issue arises from combining the square roots.

To understand what goes wrong, consider the function
\begin{equation}
  f(z) = \sqrt{(z - a) (z - b)}.
\end{equation}
The quantity under the square root is positive for $z \in (-\infty, a) \cup (b, \infty)$ (where we assume without loss of generality that $a < b$).  It is negative for $z \in (a, b)$, and there we expect a branch cut.  To understand this branch cut, take $z = x + i \epsilon$ for real $x$ and $\epsilon \to 0$.

The quantity under the square root is now
\begin{equation}
  (x + i \epsilon - a)(x + i \epsilon - b) = (x - a)(x - b) + i \epsilon (2 x - a - b) - \epsilon^2.
\end{equation}
It is now easy to see that for $x < \frac {a + b} 2$ the imaginary part is negative, for $x > \frac {a + b} 2$ it is positive and it switches sign at $x = \frac {a + b} 2$.  No such issues arise for the function $\sqrt{z - a} \sqrt{z - b}$.

So we need to split the two square roots as
\begin{equation}
  \sqrt{(s - (m_1 + m_2)^2)(s - (m_1 - m_2)^2)} \to
  \sqrt{s - (m_1 + m_2)^2} \sqrt{s - (m_1 - m_2)^2}.
\end{equation}
Once this is done we notice that the ratio $\frac {P + \sqrt{Q}}{P - \sqrt{Q}}$ is a perfect square.  Extracting the square root and splitting the logarithms we obtain
\begin{multline}
  \frac 1 {\sqrt{s - (m_1 + m_2)^2} \sqrt{s - (m_1 - m_2)^2}} \Bigl(
  \log (\sqrt{s - (m_1 + m_2)^2} - \sqrt{s - (m_1 - m_2)^2}) - \\
  \log (\sqrt{s - (m_1 + m_2)^2} + \sqrt{s - (m_1 - m_2)^2})\Bigr).
\end{multline}

When going around $s = (m_1 - m_2)^2$ we take $s - (m_1 - m_2)^2 \to e^{2 \pi i} (s - (m_1 - m_2)^2)$ which implies $\sqrt{s - (m_1 - m_2)^2} \to -\sqrt{s - (m_1 - m_2)^2}$.  Under this transformation the two logarithms above get swapped and we also obtain a sign from the global prefactor.  The combination is invariant and therefore the pseudo-threshold singularity is invisible on the main sheet.

For the threshold, we take $s - (m_1 + m_2)^2 \to e^{2 \pi i} (s - (m_1 + m_2)^2)$ and therefore $\sqrt{s - (m_1 + m_2)^2} \to -\sqrt{s - (m_1 + m_2)^2}$.  The answer is not invariant under the monodromy anymore and we obtain a branch cut starting at $s = (m_1 + m_2)^2$.

Now that the bubble singularities are correctly reproduced so we should check that the tadpole singularities have not been disturbed.  It turns out that they haven't but for an interesting reason; when taking $m_e^2 \to e^{2 \pi i} m_e^2$ or $m_e \to e^{\pi i} m_e = -m_e$, the logarithmic monodromy works the same way, but now the threshold and pseudo-threshold terms in the prefactor swap places.

Strictly speaking (see also sec.~\ref{sec:massive-bubble-3D}) we should use $\sqrt{m_e^2}$ instead of $m_e$ in the formula above since the variables the integral depends on are $m_e^2$.

\subsection{Bubble  integral in three dimensions}
\label{sec:massive-bubble-3D}

Let us now ask what happens in three dimensions.  There the situation is completely different, since $m_i = 0$ are not logarithmic singularities anymore, but square root.

In this case the answer should be
\begin{equation}
\frac 1 {\sqrt{s}} \log \frac {m_1 + m_2 + \sqrt{s}}{m_1 + m_2 - \sqrt{s}}.
\end{equation}

Now, we only get the pseudo-threshold singularity at $s = (m_1 - m_2)^2$ if we interpret $m_e = \sqrt{m_e^2}$ and analytically continue $m_e \to e^{2 \pi i} m_e$ to change the relative signs of $m_1$ and $m_2$.

Let us first match the square root singularities.  Since we want to have $\sqrt{m_1^2}$ and $\sqrt{m_2^2}$ and we also want there to not be sequential singularities in $\sqrt{m_1^2}$ followed by $\sqrt{m_2^2}$, the first approximation to the answer is $\sqrt{m_1^2} + \sqrt{m_2^2}$.  Next, we want to match the logarithmic branch points at $s = (m_1 + m_2)^2$ and $s = (m_1 - m_2)^2$.  This can be done by taking
\begin{equation}
\log \frac {s - (\sqrt{m_1^2} + \sqrt{m_2^2})^2}{s - (\sqrt{m_1^2} - \sqrt{m_2^2})^2}.
\end{equation}
This would be a great guess if it weren't for the fact that the leading singularity is $\frac 1 {\sqrt{s}}$.  Since $s = 0$ does not occur on the main sheet (it is a second type singularity where $\alpha_1 + \alpha_2 = 0$ so they can not both be positive), this singularity has to be shielded.  So then by Galois symmetry the symbol has to depend on $\sqrt{s}$ as well.  In terms of $\sqrt{s}$ the Landau singularities are at $\sqrt{s} = \pm  (\sqrt{m_1^2} + \sqrt{m_2^2})$ and $\sqrt{s} = \pm (\sqrt{m_1^2} - \sqrt{m_2^2})$.

We want to also avoid the pseudo-threshold so the only option left is
\begin{equation}
\frac 1 {\sqrt{s}} \log \frac {\sqrt{s} + (\sqrt{m_1^2} + \sqrt{m_2^2})}{\sqrt{s} - (\sqrt{m_1^2} + \sqrt{m_2^2})}.
\end{equation}

We could have also noticed another problem that the first ansatz has, which is that after we changed the logarithm argument to match the logarithmic singularities, it is \emph{not true} anymore that the discontinuity under $m_1^2 \to e^{2 \pi i} m_1^2$ does not have a square root singularity at $m_2^2 = 0$.  However, the second form has this property.  Indeed,
\begin{equation}
\log \frac {\sqrt{s} + (\sqrt{m_1^2} + \sqrt{m_2^2})}{\sqrt{s} - (\sqrt{m_1^2} + \sqrt{m_2^2})} \to
\log \frac {\sqrt{s} + (-\sqrt{m_1^2} + \sqrt{m_2^2})}{\sqrt{s} - (-\sqrt{m_1^2} + \sqrt{m_2^2})}
\end{equation}
and the difference is
\begin{equation}
\log \frac {(\sqrt{s} + \sqrt{m_1^2})^2 - m_2^2}{(\sqrt{s} - \sqrt{m_1^2})^2 - m_2^2}.
\end{equation}
Notice how $\sqrt{m_2^2}$ has disappeared from the discontinuity!

The answer can be written in a different way, using
\begin{gather}
L_1 = \frac {(\sqrt{s} + m_1)^2 - m_2^2}{(\sqrt{s} - m_1)^2 - m_2^2} = \frac {(\sqrt{s} + m_1 + m_2)(\sqrt{s} + m_1 - m_2)}{(\sqrt{s} - m_1 + m_2)(\sqrt{s} - m_1 - m_2)}, \\
L_2 = \frac {(\sqrt{s} + m_2)^2 - m_1^2}{(\sqrt{s} - m_2)^2 - m_1^2} = \frac {(\sqrt{s} + m_1 + m_2)(\sqrt{s} - m_1 + m_2)}{(\sqrt{s} + m_1 - m_2)(\sqrt{s} - m_1 - m_2)}.
\end{gather}
Taking the product we find
\begin{equation}
L_1 L_2 = \Bigl(\frac {\sqrt{s} + m_1 + m_2}{\sqrt{s} - m_1 - m_2}\Bigr)^2,
\end{equation}
which is the square of the argument of the logarithm in the expression above for the bubble integral in three dimensions.

An advantage of this writing is that the arguments of the logarithm have a simpler structure, with a single square root.
\begin{gather}
L_1 = \frac{s + m_1^2 - m_2^2 + 2 m_1 \sqrt{s}}{s + m_1^2 - m_2^2 - 2 m_1 \sqrt{s}}, \\
L_2 = \frac{s - m_1^2 + m_2^2 + 2 m_2 \sqrt{s}}{s - m_1^2 + m_2^2 - 2 m_2 \sqrt{s}}.
\end{gather}
This form could have been obtained by starting with $\sqrt{m_1^2}$ and trying to find a combination $L_1 = \frac {P_1 + \sqrt{Q_1}}{P_1 - \sqrt{Q_1}}$ where $Q_1$ contains only $\sqrt{m_1^2}$ and $\sqrt{s}$ and similarly for $\sqrt{m_2^2}$ and $L_2$.

\section{Sunrise}

\subsection{Sunrise with three massive propagators}
\label{sec:sunrise-three-massive}

We first analyze the case with three massive propagators.  We will be able to obtain the cases with massless propagators by setting some of the masses to zero.  We have an integral
\begin{equation}
  2 \int \frac {d^d q_1 d^d q_2 d \alpha_1 d \alpha_2}{F^3},
\end{equation}
where
\begin{equation}
  F(q_1, q_2, \alpha_1, \alpha_2) = \alpha_1 (q_1^2 - m_1^2) + \alpha_2 (q_2^2 - m_2^2) + \alpha_3 (q_3^2 - m_3^2)
\end{equation}
where $q_3 = p - q_1 - q_2$ and $\alpha_3 = 1 - \alpha_1 - \alpha_2$.

Imposing the stationary point conditions we find
\begin{gather}
  0 = \frac {\partial F}{\partial \alpha_1} = q_1^2 - m_1^2 - (q_3^2 - m_3^2), \\
  0 = \frac {\partial F}{\partial \alpha_2} = q_2^2 - m_2^2 - (q_3^2 - m_3^2), \\
  0 = \frac {\partial F}{\partial q_1} = 2 (\alpha_1 q_1 + (1 - \alpha_1 - \alpha_2) (q_1 + q_2 - p)), \\
  0 = \frac {\partial F}{\partial q_2} = 2 (\alpha_2 q_2 + (1 - \alpha_1 - \alpha_2) (q_1 + q_2 - p)).
\end{gather}

The Landau loop equations are the same as before but the on-shell conditions are not.  For this reason we will call them deformed on-shell equations below.

Let us first solve these conditions in the case where we have the usual on-shell conditions $q_i^2 = m_i^2$ for $i = 1, 2, 3$ and $p = q_1 + q_2 + q_3$.  The Landau loop equations read $\alpha_1 q_1 = \alpha_2 q_2 = \alpha_3 q_3 = v$, where $v$ is the common value.  From these equations we have $q_i = \alpha_i^{-1} v$ and from momentum conservation we have
\begin{equation}
  p = q_1 + q_2 + q_3 = (\alpha_1^{-1} + \alpha_2^{-1} + \alpha_3^{-1}) v.
\end{equation}
Therefore
\begin{equation}
  v = \frac p {\alpha_1^{-1} + \alpha_2^{-1} + \alpha_3^{-1}}
\end{equation}
and
\begin{equation}
  \label{eq:landau-loop-sol}
  q_i = \frac {p \alpha_i^{-1}}  {\alpha_1^{-1} + \alpha_2^{-1} + \alpha_3^{-1}}
\end{equation}

At the threshold, where $p^2 = (m_1 + m_2 + m_3)^2$ we have $\alpha_i = \frac {m_i^{-1}}{\sum_j m_j^{-1}}$.  Explicitly, we have
\begin{gather}
  \alpha_1 = \frac {m_2 m_3}{m_1 m_2 + m_1 m_3 + m_2 m_3}, \\
  \alpha_2 = \frac {m_1 m_3}{m_1 m_2 + m_1 m_3 + m_2 m_3}, \\
  \alpha_3 = \frac {m_1 m_2}{m_1 m_2 + m_1 m_3 + m_2 m_3}.
\end{gather}
We will study cases where some $m_e = 0$ below.

Instead of imposing the on-shell conditions, let us now impose
\begin{equation}
  q_1^2 - m_1^2 = q_2^2 - m_2^2 = q_3^2 - m_3^2 = \epsilon,
\end{equation}
where $\epsilon$ is a small quantity.\footnote{We could have chosen to parametrize the neighborhood of the threshold by $p^2 = (m_1 + m_2 + m_3)^2 - \epsilon$, but our choice of parametrization has several benefits.  In particular, we have, at the stationary point, that $F = (\alpha_1 + \alpha_2 + \alpha_3) \epsilon = \epsilon$.}  The solutions of Landau loop equations are the same as in Eq.~\eqref{eq:landau-loop-sol}.  

We have
\begin{gather}
  \label{eq:alpha-relations}
  \frac {\alpha_1^{-1}}{\alpha_1^{-1} + \alpha_2^{-1} + (1 - \alpha_1 - \alpha_2)^{-1}} = \sqrt{\frac {q_1^2}{p^2}} = \sqrt{\frac{\epsilon + m_1^2}{p^2}}, \\
  \frac {\alpha_2^{-1}}{\alpha_1^{-1} + \alpha_2^{-1} + (1 - \alpha_1 - \alpha_2)^{-1}} = \sqrt{\frac {q_2^2}{p^2}} = \sqrt{\frac{\epsilon + m_2^2}{p^2}}, \\
  \frac {(1 - \alpha_1 - \alpha_2)^{-1}}{\alpha_1^{-1} + \alpha_2^{-1} + (1 - \alpha_1 - \alpha_2)^{-1}} = \sqrt{\frac {q_3^2}{p^2}} = \sqrt{\frac{\epsilon + m_3^2}{p^2}}.
\end{gather}
Adding them up we find
\begin{multline}
  p^2 = \Bigl(\sqrt{\epsilon + m_1^2} + \sqrt{\epsilon + m_2^2} + \sqrt{\epsilon + m_3^2}\Bigr)^2 = \\
  = (m_1 + m_2 + m_3)^2 + \epsilon (m_1 + m_2 + m_3) (m_1^{-1} + m_2^{-1} + m_3^{-1}) + \mathcal{O}(\epsilon^2).
\end{multline}
In this expansion in powers of $\epsilon$ it is crucial that none of the masses vanish.  When they do (see below) the expansion becomes an expansion in $\sqrt{\epsilon}$ instead.  This is one of the ways the discontinuous nature of the limit $\epsilon \to 0$ manifests itself.

Then, it follows that
\begin{equation}
  \label{eq:solutions_alpha}
  \alpha_e^* = \frac {(\epsilon + m_e^2)^{-\frac 1 2}}{(\epsilon + m_1^2)^{-\frac 1 2} + (\epsilon + m_2^2)^{-\frac 1 2} + (\epsilon + m_3^2)^{-\frac 1 2}}
\end{equation}
and
\begin{equation}
  \label{eq:solutions_q}
  q_e^* = \frac {\sqrt{\epsilon + m_e^2}}{\sqrt{\epsilon + m_1^2} + \sqrt{\epsilon + m_2^2} + \sqrt{\epsilon + m_3^2}} p.
\end{equation}

The value of $F$ at the critical point is
\begin{equation}
F^* = \alpha_1^* ((q_1^*)^2 - m_1^2) + \alpha_2^* ((q_2^*)^2 - m_2^2) + \alpha_3^* ((q_3^*)^2 - m_3^2) = (\alpha_1^* + \alpha_2^* + \alpha_3^*) \epsilon = \epsilon.
\end{equation}

For the Hessian matrix we need the second order derivatives
\begin{gather}
  \frac {\partial^2 F}{\partial q_1^\mu \partial q_1^\nu} =
  2 (1 - \alpha_2) \eta_{\mu \nu}, \\
  \frac {\partial^2 F}{\partial q_1^\mu \partial q_2^\nu} =
  2 (1 - \alpha_1 - \alpha_2) \eta_{\mu \nu}, \\
  \frac {\partial^2 F}{\partial q_2^\mu \partial q_2^\nu} =
  2 (1 - \alpha_1) \eta_{\mu \nu}, \\
  \frac {\partial^2 F}{\partial q_1^\mu \partial \alpha_1} = 2 (p - q_2)_\mu, \\
  \frac {\partial^2 F}{\partial q_1^\mu \partial \alpha_2} = 2 (p - q_1 - q_2)_\mu = 2 q_{3, \mu}, \\
  \frac {\partial^2 F}{\partial q_2^\mu \partial \alpha_1} = 2 (p - q_1 - q_2)_\mu = 2 q_{3, \mu}, \\
  \frac {\partial^2 F}{\partial q_2^\mu \partial \alpha_2} = 2 (p - q_1)_\mu.
\end{gather}
Hence, the Hessian matrix is
\begin{equation}
  \label{eq:hessian}
  H = 2 \begin{pmatrix}
(1 - \alpha_2) \eta_{\mu \nu} & (1 - \alpha_1 - \alpha_2) \eta_{\mu \nu} & (p - q_2)_\nu & q_{3, \nu}, \\
(1 - \alpha_1 - \alpha_2) \eta_{\mu \nu} & (1 - \alpha_1) \eta_{\mu \nu} & q_{3, \nu} & (p - q_1)_\nu \\
(p - q_2)_{\mu} & q_{3, \mu} & 0 & 0 \\
q_{3, \mu} & (p - q_1)_{\mu} & 0 & 0
\end{pmatrix} = \\
2 \begin{pmatrix}
A & B \\
B^t & 0
\end{pmatrix},
\end{equation}
where
\begin{gather}
  A =
  \begin{pmatrix}
    (1 - \alpha_2) \eta_{\mu \nu} & (1 - \alpha_1 - \alpha_2) \eta_{\mu \nu} \\
    (1 - \alpha_1 - \alpha_2) \eta_{\mu \nu} & (1 - \alpha_1) \eta_{\mu \nu}
  \end{pmatrix}, \\
  B =
  \begin{pmatrix}
    (p - q_2)_\nu & q_{3, \nu} \\
    q_{3, \nu} & (p - q_1)_\nu
  \end{pmatrix}.
\end{gather}
The matrix $A$ is $(2 D) \times (2 D)$-dimensional while the matrix $B$ is $(2 D) \times 2$-dimensional.

Next, we use the fact that
\begin{equation}
\det \begin{pmatrix}
A & B \\ B^t & 0
\end{pmatrix} =
\det \begin{pmatrix}
A & B \\ 0 & -B^t A^{-1} B
\end{pmatrix} =
\det A \det (-B^t A^{-1} B).
\end{equation}
Since
\begin{equation}
A = \begin{pmatrix}
(1 - \alpha_2) \eta & (1 - \alpha_1 - \alpha_2) \eta \\
(1 - \alpha_1 - \alpha_2) \eta & (1 - \alpha_1) \eta
\end{pmatrix} = \\
\begin{pmatrix}
1 - \alpha_2 & (1 - \alpha_1 - \alpha_2) \\
(1 - \alpha_1 - \alpha_2) & 1 - \alpha_1
\end{pmatrix} \otimes \eta.
\end{equation}
Then,
\begin{equation}
A^{-1} = \frac 1 {\alpha_1 \alpha_2 + \alpha_1 \alpha_3 + \alpha_2 \alpha_3}
\begin{pmatrix}
1 - \alpha_1 & \alpha_1 + \alpha_2 - 1\\
\alpha_1 + \alpha_2 - 1 & 1 - \alpha_2
\end{pmatrix} \otimes \eta^{-1}
\end{equation}
and
\begin{equation}
  \det A = \Bigl(\alpha_1 \alpha_2 + \alpha_1 \alpha_3 + \alpha_2 \alpha_3\Bigr)^D (\det \eta)^2,
\end{equation}
where we have used the fact\footnote{This is an easy consequence of $P \otimes Q = (P \otimes 1) (1 \otimes Q)$.} that $\det (P \otimes Q) = (\det P)^{\operatorname{rank} Q} (\det Q)^{\operatorname{rank} P}$.

In the end, $B^t A^{-1} B$ is a $2 \times 2$ matrix which can be computed explicitly.  We find
\begin{multline}
  B^t A^{-1} B =
  \frac 1 {\alpha_1 \alpha_2 + \alpha_1 \alpha_3 + \alpha_2 \alpha_3} \times \\
  \begin{pmatrix}
    p - q_2 & q_3 \\
    q_3 & p - q_1
  \end{pmatrix}
  \begin{pmatrix}
    (1 - \alpha_1) \eta^{-1} & (\alpha_1 + \alpha_2 - 1) \eta^{-1} \\
(\alpha_1 + \alpha_2 - 1) \eta^{-1} & (1 - \alpha_2) \eta^{-1}
  \end{pmatrix}
  \begin{pmatrix}
    p - q_2 & q_3 \\
    q_3 & p - q_1
  \end{pmatrix}.
\end{multline}

The determinant can be computed explicitly, to find
\begin{equation}
  \det ((B^*)^t (A^*)^{-1} B^*) =
  \frac{(\alpha_1^*)^2 (\alpha_2^*)^2 (\alpha_3^*)^2 (p^2)^2}{(\alpha_1^* \alpha_2^* + \alpha_1^* \alpha_3^* + \alpha_2^* \alpha_3^*)^{5}},
\end{equation}
where star means evaluated at the critical point.  Here we have used the solutions of the Landau equations to express $q_e^*$ in terms of $p$.

In the end we have
\begin{equation}
  \label{eq:sunrise-hessian}
  \det H^* = 2^{2 D + 2} (\alpha_1^*)^2 (\alpha_2^*)^2 (\alpha_3^*)^2 (p^2)^2 (\alpha_1^* \alpha_2^* + \alpha_1^* \alpha_3^* + \alpha_2^* \alpha_3^*)^{D - 5}.
\end{equation}
Recall that this is an expansion around $\epsilon = 0$ for
\begin{equation}
  \epsilon = \frac {p^2 - (m_1 + m_2 + m_3)^2}{(m_1 + m_2 + m_3)(m_1^{-1} + m_2^{-1} + m_3^{-1})}.
\end{equation}
As long as none of the masses vanish $p^2 \neq 0$ and $\alpha_e^* \neq 0$ and therefore the Hessian matrix is non-singular.

\subsection{Sunrise with a massless propagator}
\label{sec:sunrise-one-massless}

In this section we study the sunrise integral with one massless propagator.  Since this integral is divergent we consider it in $D$ dimensions.  In this case we have
\begin{equation}
F(q_1, q_2, \alpha_1, \alpha_2) = \alpha_1 (q_1^2 - m_1^2) + \alpha_2 q_2^2 + \alpha_3 (q_3^2 - m_3^2),
\end{equation}
where $q_3 = p - q_1 - q_2$ and $\alpha_3 = 1 - \alpha_1 - \alpha_2$.

Setting $m_2 = 0$ in eqs.~\eqref{eq:alpha-relations} we find
\begin{gather}
  \frac {\alpha_1^{-1}}{\alpha_1^{-1} + \alpha_2^{-1} + (1 - \alpha_1 - \alpha_2)^{-1}} = \sqrt{\frac {q_1^2}{p^2}} = \sqrt{\frac{\epsilon + m_1^2}{p^2}}, \\
  \frac {\alpha_2^{-1}}{\alpha_1^{-1} + \alpha_2^{-1} + (1 - \alpha_1 - \alpha_2)^{-1}} = \sqrt{\frac {q_2^2}{p^2}} = \sqrt{\frac{\epsilon}{p^2}}, \\
  \frac {(1 - \alpha_1 - \alpha_2)^{-1}}{\alpha_1^{-1} + \alpha_2^{-1} + (1 - \alpha_1 - \alpha_2)^{-1}} = \sqrt{\frac {q_3^2}{p^2}} = \sqrt{\frac{\epsilon + m_3^2}{p^2}}.
\end{gather}
Adding them up we find
\begin{equation}
  p^2 = (\sqrt{\epsilon + m_1^2} + \sqrt{\epsilon} + \sqrt{\epsilon + m_3^2})^2 =
  (m_1 + m_3)^2 + 2 (m_1 + m_3) \sqrt{\epsilon} + \mathcal{O}(\epsilon).
\end{equation}

Then, it follows that
\begin{gather}
  \alpha_1^* = \frac {(\epsilon + m_1^2)^{-\frac 1 2}}{(\epsilon + m_1^2)^{-\frac 1 2} + \epsilon^{-\frac 1 2} + (\epsilon + m_3^2)^{-\frac 1 2}} = \frac {\sqrt{\epsilon}}{m_1} + \mathcal{O}(\epsilon), \\
  \alpha_2^* = \frac {\epsilon^{-\frac 1 2}}{(\epsilon + m_1^2)^{-\frac 1 2} + \epsilon^{-\frac 1 2} + (\epsilon + m_3^2)^{-\frac 1 2}} = 1 - \frac {m_1 + m_3}{m_1 m_3} \sqrt{\epsilon} + \mathcal{O}(\epsilon), \\
  \alpha_3^* = \frac {(\epsilon + m_3^2)^{-\frac 1 2}}{(\epsilon + m_1^2)^{-\frac 1 2} + \epsilon^{-\frac 1 2} + (\epsilon + m_3^2)^{-\frac 1 2}} = \frac {\sqrt{\epsilon}}{m_3} + \mathcal{O}(\epsilon),
\end{gather}
and
\begin{gather}
  \label{eq:solutions_q_1massless}
  q_1^* = \frac {m_1 p}{m_1 + m_3} - \frac {m_1 p \sqrt{\epsilon}}{(m_1 + m_3)^2} + \mathcal{O}(\epsilon), \\
  q_2^* = \frac {p \sqrt{\epsilon}}{m_1 + m_3} + \mathcal{O}(\epsilon), \\
  q_3^* = \frac {m_3 p}{m_1 + m_3} - \frac {m_3 p \sqrt{\epsilon}}{(m_1 + m_3)^2} + \mathcal{O}(\epsilon).
\end{gather}

The value of $F$ at the critical point is
\begin{equation}
F^* = \alpha_1^* ((q_1^*)^2 - m_1^2) + \alpha_2^* (q_2^*)^2 + \alpha_3^* ((q_3^*)^2 - m_3^2) = (\alpha_1^* + \alpha_2^* + \alpha_3^*) \epsilon = \epsilon.
\end{equation}

For the Hessian matrix calculation we have
\begin{equation}
  \alpha_1^* \alpha_2^* + \alpha_1^* \alpha_3^* + \alpha_2^* \alpha_3^* = (m_1^{-1} + m_3^{-1}) \sqrt{\epsilon} + \mathcal{O}(\epsilon).
\end{equation}

In conclusion, we have
\begin{equation}
  \det H^* = 2^{2 D + 2} (p^2)^2 \frac {(m_1 + m_3)^{D - 5}}{m_1^{D - 3} m_3^{D - 3}} \epsilon^{\frac {D - 1} 2} + \cdots.
\end{equation}
Since
\begin{equation}
  \epsilon = \Bigl(\frac {p^2 - (m_1 + m_3)^2}{2 (m_1 + m_3)}\Bigr)^2
\end{equation}
we have that $\det H^* \sim (p^2 - (m_1 + m_3)^2)^{D - 1}$ and since it occurs as $(\det H^*)^{-\frac 1 2}$ in the asymptotic formula, it contributes a $-\frac {D - 1} 2$ shift to the Landau exponent of the sunrise singularity.  Interestingly, in even dimensions this shift can change the nature of a singularity from square root to logarithmic and vice versa.

In this case, the uncorrected Landau exponent is $\gamma = \frac 1 2 (2 D - 4)$, which is obtained by using $\gamma = \frac 1 2 (\ell D - n - 1)$ and $\ell = 2$ (two loops), $n = 3$ (three propagators).  To this we apply the correction $-\frac {D - 1} 2$ from the behavior of the Hessian to obtain
\begin{equation}
  \label{eq:sunrise-one-massless-deformed}
  \gamma_{\text{final}} = \frac {D - 3} 2.
\end{equation}

\subsection{Sunrise with two massless propagators}
\label{sec:sunrise-two-massless}

If we have $m_2 = m_3 = 0$, then
\begin{equation}
  p^2 = (\sqrt{\epsilon + m_1^2} + 2 \sqrt{\epsilon})^2,
\end{equation}
which implies
\begin{equation}
  \epsilon = \Bigl(\frac {p^2 - m_1^2}{4 m_1}\Bigr)^2.
\end{equation}

Then, we have
\begin{gather}
  \alpha_1^* = \frac {(m_1^2 + \epsilon)^{-\frac 1 2}}{(m_1^2 + \epsilon)^{-\frac 1 2} + 2 \epsilon^{-\frac 1 2}} = \frac 1 {2 m_1} \sqrt{\epsilon} + \mathcal{O}(\epsilon), \\
  \alpha_2^* = \alpha_3^* = \frac {\epsilon^{-\frac 1 2}}{(m_1^2 + \epsilon)^{-\frac 1 2} + 2 \epsilon^{-\frac 1 2}} = \frac 1 2 - \frac 1 {4 m_1} \sqrt{\epsilon} + \mathcal{O}(\epsilon).
\end{gather}
Plugging these expansions in the expression for the determinant of the Hessian matrix in Eq.~\eqref{eq:sunrise-hessian} we find $\det H^* \sim \epsilon$.  The only contribution to the singularity arises from $(\alpha_1^*)^2 \sim \epsilon$.  Ultimately, this produces a shift of $-\frac 1 2$ to the Landau exponent of the sunrise leading singularity.  This time, the shift is dimension independent and it always changes the nature of the singularity.

\subsection{Sunrise with three massless propagators}
\label{sec:sunrise-three-massless}

If $m_1 = m_2 = m_3 = 0$, then
\begin{equation}
  p^2 = (3 \sqrt{\epsilon})^2 = 9 \epsilon.
\end{equation}
We also have $\alpha_e^* = \frac 1 3$.  Then, the only part of the determinant of the Hessian matrix which becomes singular is the $(p^2)^2$ term.  Since the Hessian matrix determinant occurs as $(\det H^*)^{-\frac 1 2}$ it contributes $(p^2)^{-1} \sim \epsilon^{-1}$.  In other words, for the fully massless case we obtain a shift of $-1$ to the Landau exponent of the sunrise leading singularity.

\subsection{Sunrise at \texorpdfstring{$p^2$ = 0}{p2=0}}
\label{sec:sunrise-zero-mom-squared}

There is another special point of the sunrise, at $p^2 = 0$.  Then, with the notations
\begin{equation}
\Delta = m_1^4 + m_2^4 + m_3^4 - 2 m_1^2 m_2^2 - 2 m_1^2 m_3^2 - 2 m_2^2 m_3^2
\end{equation}
and
\begin{gather}
(1 - z) (1 - \bar{z}) = \frac {m_2^2}{m_1^2}, \qquad
z \bar{z} = \frac {m_3^2}{m_1^2},
\end{gather}
then the answer is
\begin{equation}
\frac {D(z)}{z - \bar{z}},
\end{equation}
where $D(z)$ is the Bloch-Wigner dilogarithm
\begin{equation}
    2 i D(z) = \operatorname{Li}_2(z) - \operatorname{Li}_2(\bar{z}) + \frac 1 2 \log (z \bar{z}) \log \frac {1 - z}{1 - \bar{z}}.
\end{equation}

For the symbol we find
\begin{equation}
    \mathcal{S}(4 i D(z)) =
    \Bigl[z \bar{z} \Bigm| \frac{1 - z}{1 - \bar{z}}\Bigr]
    -\Bigl[(1 - z)(1 - \bar{z}) \Bigm| \frac {z}{\bar{z}}\Bigr],
\end{equation}
which implies that the symbol for the sunrise at $p^2 = 0$ (including the prefactor) is
\begin{multline}
    \mathcal{S}_{\text{sun}} = \frac 1 {\sqrt{\Delta}} \Biggl\{
    \Bigl[m_1^2 \Bigm| \frac {-m_1^2 + m_2^2 + m_3^2 - \sqrt{\Delta}}{-m_1^2 + m_2^2 + m_3^2 + \sqrt{\Delta}}\Bigr] +\\
    \Bigl[m_2^2 \Bigm| \frac {m_1^2 - m_2^2 + m_3^2 - \sqrt{\Delta}}{m_1^2 - m_2^2 + m_3^2 + \sqrt{\Delta}}\Bigr] +
    \Bigl[m_3^2 \Bigm| \frac {m_1^2 + m_2^2 - m_3^2 - \sqrt{\Delta}}{m_1^2 + m_2^2 - m_3^2 + \sqrt{\Delta}}\Bigr]\Biggr\}.
\end{multline}

Incidentally, it is clear on this form that for massive particles in some cases the correct first entry condition is that it be equal to the masses.  This is not universal; indeed, the massive bubble example is a counter-example.  In that case $m = 0$ is still the logarithmic singularity in the first entry (in even dimensions), but one has to accommodate the square root singularities as well.

The singularities arise at
\begin{equation}
p^2 = (m_1 \pm_{1} m_2 \pm_{2} m_3)^2 = 0
\end{equation}
which, given $p^2 = 0$ implies $m_1 \pm_{1} m_2 \pm_{2} m_3 = 0$ and at $m_e = 0$.

The leading singularity can be computed and is proportional to $\frac 1 {\sqrt{\Delta}}$, where
\begin{equation}
    \Delta = -(m_1 + m_2 + m_3)(-m_1 + m_2 + m_3)(m_1 - m_2 + m_3)(m_1 + m_2 - m_3).
\end{equation}

An alternative way to write the answer is
\begin{multline}
  \label{eq:sunrise-p2-eq-zero}
\mathcal{S}_{\text{sun}} = \frac {4 i}{\sqrt{\Delta}} \Biggl\lbrace
\Bigl[m_1 \Bigm| \mfrac {\sqrt{m_{+++}} \sqrt{m_{-++}} - i \sqrt{m_{+-+}} \sqrt{m_{++-}}}{\sqrt{m_{+++}} \sqrt{m_{-++}} + i \sqrt{m_{+-+}} \sqrt{m_{++-}}}\Bigr] \\ +
\Bigl[m_2 \Bigm| \mfrac {\sqrt{m_{+++}} \sqrt{m_{+-+}} - i \sqrt{m_{-++}} \sqrt{m_{++-}}}{\sqrt{m_{+++}} \sqrt{m_{+-+}} + i \sqrt{m_{-++}} \sqrt{m_{++-}}}\Bigr] +
\Bigl[m_3 \Bigm| \mfrac {\sqrt{m_{+++}} \sqrt{m_{++-}} - i \sqrt{m_{-++}} \sqrt{m_{+-+}}}{\sqrt{m_{+++}} \sqrt{m_{++-}} + i \sqrt{m_{-++}} \sqrt{m_{+-+}}}\Bigr]
\Biggr\rbrace.
\end{multline}

\begin{figure}
  \centering
  \includegraphics{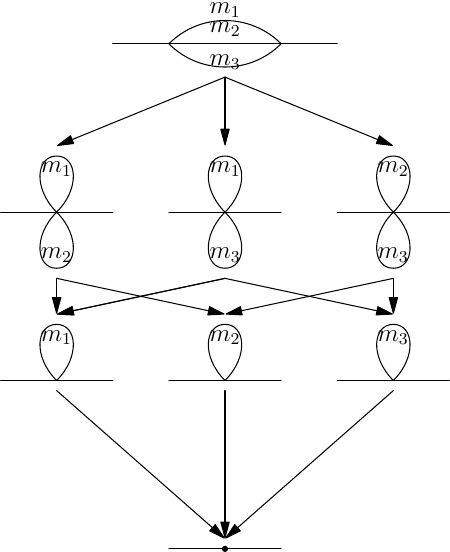}
  \caption{Hierarchy of singularities for a sunrise integral.  The dot at the bottom represents the sunrise integral while the rest of the diagrams display the cut propagators.  Not all contractions are drawn, for example there are contractions from the sunrise Landau diagram to the tadpoles and to the elementary graph as well.}
  \label{fig:sunrise-pham-web}
\end{figure}

Let us point out a few remarkable features of Eq.~\eqref{eq:sunrise-p2-eq-zero}.  If we denote $m_{\alpha \beta \gamma} = \sqrt{\alpha m_1 + \beta m_2 + \gamma m_3}$, then under $m_1 \to e^{i \pi} m_1 = -m_1$ (which corresponds to $m_1^2 \to e^{2 \pi i} m_1^2$) we have
\begin{gather}
  m_{{+}{+}{+}} \to m_{{-}{+}{+}}, \\
  m_{{-}{+}{+}} \to m_{{+}{+}{+}}, \\
  m_{{+}{-}{+}} \to m_{{-}{-}{+}} = i m_{{+}{+}{-}}, \\
  m_{{+}{+}{-}} \to m_{{-}{+}{-}} = i m_{{+}{-}{+}}.
\end{gather}
Under these transformations the prefactor transforms as
\[
  m_{{+}{+}{-}} m_{{-}{+}{+}} m_{{+}{-}{+}} m_{{+}{+}{-}} \to
  -m_{{+}{+}{-}} m_{{-}{+}{+}} m_{{+}{-}{+}} m_{{+}{+}{-}}.
\]
The symbol letter following $m_1$ transforms as
\[
  \frac {m_{{+}{+}{+}} m_{{-}{+}{+}} - i m_{{+}{-}{+}} m_{{+}{+}{-}}} {m_{{+}{+}{+}} m_{{-}{+}{+}} + i m_{{+}{-}{+}} m_{{+}{+}{-}}} \to
  \Bigl(\frac {m_{{+}{+}{+}} m_{{-}{+}{+}} - i m_{{+}{-}{+}} m_{{+}{+}{-}}} {m_{{+}{+}{+}} m_{{-}{+}{+}} + i m_{{+}{-}{+}} m_{{+}{+}{-}}}\Bigr)^{-1}.
\]
The symbol letter following $m_2$ transforms as
\[
  \frac {m_{{+}{+}{+}} m_{{+}{-}{+}} - i m_{{-}{+}{+}} m_{{+}{+}{-}}} {m_{{+}{+}{+}} m_{{+}{-}{+}} + i m_{{-}{+}{+}} m_{{+}{+}{-}}} \to
  \Bigl(\frac {m_{{+}{+}{+}} m_{{+}{-}{+}} - i m_{{-}{+}{+}} m_{{+}{+}{-}}} {m_{{+}{+}{+}} m_{{+}{-}{+}} + i m_{{-}{+}{+}} m_{{+}{+}{-}}}\Bigr)^{-1}.
\]
Finally, the symbol letter following $m_3$ transforms as
\[
  \frac {m_{{+}{+}{+}} m_{{+}{+}{-}} - i m_{{-}{+}{+}} m_{{+}{-}{+}}}{m_{{+}{+}{+}} m_{{+}{+}{-}} + i m_{{-}{+}{+}} m_{{+}{-}{+}}} \to
  \Bigl(\frac {m_{{+}{+}{+}} m_{{+}{+}{-}} - i m_{{-}{+}{+}} m_{{+}{-}{+}}}{m_{{+}{+}{+}} m_{{+}{+}{-}} + i m_{{-}{+}{+}} m_{{+}{-}{+}}}\Bigr)^{-1}.
\]

The multiplicative sign from the prefactor cancels the sign from inverting the symbol letters.  Hence, the only term which is not invariant under $m_1^2 \to e^{2 \pi i} m_1^2$ is the first letter $m_1$ in the first term of Eq.~\eqref{eq:sunrise-p2-eq-zero}.

After taking the $m_1^2 \to e^{2 \pi i} m_1^2$ monodromy we obtain the discontinuity
\begin{equation}
  \label{eq:sunrise-p2-eq-zero-tadpole1}
  \frac {4 i (-2 \pi i)}{\sqrt{\Delta}}
\Bigl[ \frac {\sqrt{m_{+++}} \sqrt{m_{-++}} - i \sqrt{m_{+-+}} \sqrt{m_{++-}}}{\sqrt{m_{+++}} \sqrt{m_{-++}} + i \sqrt{m_{+-+}} \sqrt{m_{++-}}}\Bigr].
\end{equation}
These $m_e^2 = 0$ singularities correspond to the contraction of tadpole graphs to the elementary graph in fig.~\ref{fig:sunrise-pham-web}.

This has a logarithmic singularity at
\begin{multline}
    0 = (\sqrt{m_{+++}} \sqrt{m_{-++}} - i \sqrt{m_{+-+}} \sqrt{m_{++-}})\times \\ (\sqrt{m_{+++}} \sqrt{m_{-++}} + i \sqrt{m_{+-+}} \sqrt{m_{++-}}) = 4 m_2^2 m_3^2.
\end{multline}
These two possibilities correspond to the two contractions of the double tadpole to a single tadpole diagram.

Therefore, taking a discontinuity around $m_1^2 \to e^{2 \pi i} m_1^2$, followed by a discontinuity around $m_2^2 \to e^{2 \pi i} m_2^2$ yields the leading singularity
\[
  \frac {4 i (-2 \pi i)^2}{\sqrt{m_{+++}} \sqrt{m_{-++}} \sqrt{m_{+-+}} \sqrt{m_{++-}}}.
\]
This then has all the singularities of the sunrise Landau diagram, including for negative values of $\alpha$.

We could also think of the sunrise singularity of the tadpole absorption integral in Eq.~\eqref{eq:sunrise-p2-eq-zero-tadpole1}.  The sunrise singularity is of algebraic (not logarithmic) type but this time the only negative $\alpha$ which are allowed are the ones on the already cut propagators.  Indeed, by the same mechanism which operates for the bubble integral and differentiates between the threshold and pseudo-threshold, we can arrange so that the singularities $m_1 + m_2 + m_3 = 0$ (with all $\alpha > 0$) and $-m_1 + m_2 + m_3 = 0$ (with $\alpha_1 < 0$ and $\alpha_2, \alpha_3 > 0$) occur on the main sheet for the one-cut integral, while $m_1 - m_2 + m_3 = 0$ (which corresponds to $\alpha_2 < 0$ and $\alpha_1, \alpha_3 > 0$) and $m_1 + m_2 - m_3 = 0$ (which corresponds to $\alpha_3 < 0$ and $\alpha_1, \alpha_2 > 0$) are relegated to the second sheet and only become visible after further analytic continuation.  It is essential for the Galois symmetry argument that the two allowed possibilities $\sqrt{m_{+++}}$ and $\sqrt{m_{-++}}$ occur together.

Let us solve the Landau equations for the sunrise integral with external momentum $p$ such that $p^2 = 0$.\footnote{The previous approach fails since the Hessian determinant is proportional to $p^2$ which now vanishes.}  We have
\begin{equation}
  F(\alpha_1, \alpha_2, q_1, q_2) = \sum_{e = 1}^3 \alpha_e (q_e^2 - m_e^2)
\end{equation}
where $q_3 = p - q_1 - q_2$ and $\alpha_3 = 1 - \alpha_1 - \alpha_2$.

We obtain the following conditions
\begin{gather}
  q_1^2 - m_1^2 = q_2^2 - m_2^2 = q_3^2 - m_3^2 = \epsilon, \\
  \alpha_1 q_1 = \alpha_2 q_2 = \alpha_3 q_3 = v.
\end{gather}

Previously we derived
\begin{equation}
  q_e = \frac {p \alpha_e^{-1}}{\sum_{e'} \alpha_{e'}^{-1}}.
\end{equation}
This will not work anymore if we take $p^2 = 0$ since then $q_e^2 = 0$, which can not be on-shell if the masses are non-vanishing, as we assume.

Note that we have assumed that we can divide by $\sum_e \alpha_e^{-1}$ which therefore has to be non-vanishing.  If however $\sum_e \alpha_e^{-1} = 0$ we have that $p = 0$ (and therefore $p^2 = 0$), while the on-shell conditions for $q_e$ can be also satisfied.

The identity $\sum_e \alpha_e^{-1} = 0$ can not be satisfied when all $\alpha_e > 0$ so it does not arise on the first sheet.  Let us assume that we are looking at the one-cut integral where we have cut the propagator with momentum $q_3$.  If we are now studying the singularities of this cut integral, we have $\alpha_1, \alpha_2 > 0$ while the sign of $\alpha_3$ is a priori unconstrained.  We need to take $\alpha_3 < 0$ in order to be able to satisfy the identity $\sum_e \alpha_e^{-1} = 0$.

Squaring the Landau loop equations we obtain
\begin{equation}
  \alpha_1^2 (m_1^2 + \epsilon) = \alpha_2^2 (m_2^2 + \epsilon) = \alpha_3^2 (m_3^2 + \epsilon).
\end{equation}
Keeping in mind the signs of $\alpha$, after some manipulations we find the condition
\begin{equation}
  \sqrt{m_1^2 + \epsilon} + \sqrt{m_2^2 + \epsilon} - \sqrt{m_3^2 + \epsilon} = 0
\end{equation}
When this is satisfied we have $\Delta = 4 (m_1^2 + m_1 m_2 + m_2^2) \epsilon + \mathcal{O}(\epsilon^2)$.

One solution for the $q_e$ at the location of the Landau locus is
\begin{gather}
  q_1^0 = \sqrt{m_1^2 + \epsilon}, \qquad \vec{q}_1 = \vec{0}, \\
  q_2^0 = \sqrt{m_2^2 + \epsilon}, \qquad \vec{q}_2 = \vec{0}, \\
  q_3^0 = -\sqrt{m_3^2 + \epsilon}, \qquad \vec{q}_3 = \vec{0}.
\end{gather}
This solution is not unique and every Lorentz transformation of this particular solution is also a solution.

The Hessian matrix is
\begin{equation}
  H = 2 \begin{pmatrix}
    (1 - \alpha_2) \eta & (1 - \alpha_1 - \alpha_2) \eta & -q_2 & -(q_1 + q_2) \\
    (1 - \alpha_1 - \alpha_2) \eta & (1 - \alpha_1) \eta & -(q_1 + q_2) & -q_1 \\
    -q_2 & -(q_1 + q_2) & 0 & 0 \\
    -(q_1 + q_2) & -q_2 & 0 & 0
  \end{pmatrix} = 2 \begin{pmatrix}
    A & B \\
    B^t & 0
  \end{pmatrix},
\end{equation}
where
\begin{gather}
  A = \begin{pmatrix}
    1 - \alpha_2 & 1 - \alpha_1 - \alpha_2 \\
    1 - \alpha_1 - \alpha_2 & 1 - \alpha_1
  \end{pmatrix} \otimes \eta, \\
  B = -\begin{pmatrix}
    q_2 & q_1 + q_2 \\
    q_1 + q_2 & q_1
  \end{pmatrix}
\end{gather}

Before we computed this using the inverse of $A$, but now $A$ is not invertible.  Indeed, since $\alpha_1 \alpha_2 + \alpha_1 \alpha_3 + \alpha_2 \alpha_3 = 0$ we have $\det A = 0$.  Since the $2 \times 2$ matrix
\begin{equation}
  \begin{pmatrix}
    1 - \alpha_2 & 1 - \alpha_1 - \alpha_2 \\
    1 - \alpha_1 - \alpha_2 & 1 - \alpha_1
  \end{pmatrix} =
  \begin{pmatrix}
    \alpha_1 + \alpha_3 & \alpha_3 \\
    \alpha_3 & \alpha_2 + \alpha_3
  \end{pmatrix}
\end{equation}
has an eigenvector $\left(\begin{smallmatrix} \alpha_2 \\ \alpha_1\end{smallmatrix}\right)$ with eigenvalue zero.

The question becomes what to do in such a degenerate case where the determinant of the Hessian matrix vanishes.  A first guess would be to remove the zero modes and take the determinant of that.  However, here is an example which shows that one needs to go beyond quadratic order to understand the asymptotic behavior of the integral.  Consider
\begin{equation}
\int_{\mathbb{R}^2}  \frac {d x d y}{\epsilon + x^2 + y^4},
\end{equation}
and try to find the behavior in the limit $\epsilon \to 0$.

This is actually very easy.  We make a change of variable\footnote{This is similar to how the massless bubble integral was treated.  In fact, the change of variables below can be thought of as a weighted blow-up of the singular curve $x^2 + y^4 = 0$ at $(x, y) = (0, 0)$.}
\begin{equation}
x = \sqrt{\epsilon} u, \qquad
y = \epsilon^{\frac 1 4} v.
\end{equation}
Using this we can determine the complete dependence on $\epsilon$ (not just when $\epsilon \to 0$).  The integral becomes
\begin{equation}
\epsilon^{-\frac 1 4} \int_{\mathbb{R}^2} \frac {d u d v}{1 + u^2 + v^4} = \frac \pi 2 B(\frac 1 4, \frac 1 4) \epsilon^{-\frac 1 4}.
\end{equation}

A naive calculation where we look at the degenerate Hessian
\begin{equation}
\begin{pmatrix}
2 & 0 \\
0 & 0
\end{pmatrix}
\end{equation}
at the critical point $(x, y) = (0, 0)$ and we drop the zero modes will not be able to reproduce the exponent of $\frac 1 4$.  When the Hessian is degenerate one needs to go to higher and the analysis acquires a tropical geometry flavor as we describe in more detail below.

The example above is a toy model.  To describe in more details what happens in a physical case we consider an integral $\int \frac {d^p x d^q y}{F^r}$, where $p, q$ are some integers, $r$ is an integer of half an odd integer (in odd dimensions) and the expansion of $F$ around the critical point $(x, y) = (x^*, y^*)$ reads
\begin{multline}
  F(x, y) = \epsilon +
  \frac 1 2 \sum_{i, j = 1}^p F_{i j} (x_i - x_i^*) (x_j - x_j^*) + \\
  \frac 1 {3!} \Bigl(
  \sum_{i, j, k = 1}^p F_{i j k} (x_i - x_i^*) (x_j - x_j^*) (x_k - x_k^*) + \\
  3 \sum_{i j = 1}^p \sum_{a = 1}^q F_{i j a} (x_i - x_i^*) (x_j - x_j^*) (y_a - y_a^*) + \\
  3 \sum_{i = 1}^p \sum_{a, b = 1}^q F_{i a b} (x_i - x_i^*) (y_a - y_a^*) (y_b - y_b^*) + \\
  \sum_{a, b, c = 1}^q F_{a b c} (y_a - y_a^*) (y_b - y_b^*) (y_c - y_c^*)
  \Bigr)
\end{multline}
We have stopped at the third order since this is the highest order arising in a Feynman parametrized integral; the momenta are second order and the $\alpha$ are first order.

In the region which contributes to the leading order in $\epsilon$ we take $x_i - x_i^* = \mathcal{O}(\epsilon^{\frac 1 2})$.  In this case, the terms $F_{i j k} (x_i - x_i^*) (x_j - x_j^*) (x_k - x_k^*)$ are always subleading.

If $F_{a b c}$ is non-vanishing then we need to take $y_a - y_a^* = \mathcal{O}(\epsilon^{\frac 1 3})$ in order to obtain a sizeable contribution to the integral.  If $F_{a b c} = 0$ for all $a, b, c = 1, \dotsc, q$, then take $y_a - y_a^* = \mathcal{O}(\epsilon^{\frac 1 4})$ so that the terms $F_{i a b} (x_i - x_i^*) (y_a - y_a^*) (y_b - y_b^*) = \mathcal{O}(\epsilon)$.  More complicated options are possible.  For example, a subset of $F_{a b c}$ could vanish in which case we take $y - y^* = \mathcal{O}(\epsilon^{\frac 1 3})$ for part of the $y$'s and $y - y^* = \mathcal{O}(\epsilon^{\frac 1 4})$ for the remaining ones.\footnote{This is not in contradiction with the analysis in ref.~\cite{Bourjaily:2022vti} since the vanishing of the Hessian determinant amounts to a fine-tuning the external kinematics.}  A general analysis becomes unwieldy and it is best to apply this idea on explicit examples.

For the sunrise example for the third order expansion we need
\begin{gather}
  \frac {\partial^3 F}{\partial \alpha_1 \partial q_1^2} = 0, \qquad
  \frac {\partial^3 F}{\partial \alpha_1 \partial q_1 \partial q_2} = -2 \eta, \qquad
  \frac {\partial^3 F}{\partial \alpha_1 \partial q_2^2} = -2 \eta, \\
  \frac {\partial^3 F}{\partial \alpha_2 \partial q_1^2} = -2 \eta, \qquad
  \frac {\partial^3 F}{\partial \alpha_2 \partial q_1 \partial q_2} = -2 \eta, \qquad
  \frac {\partial^3 F}{\partial \alpha_2 \partial q_2^2} = 0.
\end{gather}

\subsection{Sunrise at the pseudo-threshold}
\label{sec:sunrise-pseudo-threshold}

The sunrise with masses $m_1, m_2, m_3$ evaluated at the pseudo-threshold $M_i = m_1 + m_2 + m_3 - 2 m_i$ is polylogarithmic (see ref.~\cite[Eq.~3.13]{Bloch:2013tra}).  Up to some numerical factors it reads
\begin{equation}
\frac 1 {\sqrt{m_1 m_2 m_3 M_i}} \sum_{j = 1}^3 \partial_{m_j} M_i \widetilde{D}\Biggl(\sqrt{\frac {m_j^2 M_i}{m_1 m_2 m_3}}\Biggr),
\end{equation}
where
\begin{equation}
2 i \tilde{D}(z) = \operatorname{Li}_2(z) - \operatorname{Li}_2(-z) + \frac 1 2 \log(z^2) \log \frac {1 - z}{1 + z}.
\end{equation}

It has symbol
\begin{equation}
\mathcal{S}(2 i \tilde{D}(z)) = [z \mid 1 -  z] - [z \mid 1 + z].
\end{equation}
The symbol of the sunrise is
\begin{multline}
\label{eq:sunrise-pseudothreshold-symbol}
\frac 1 {\sqrt{m_1} \sqrt{m_2} \sqrt{m_3} \sqrt{M_3}} \Bigl(
\Bigl[m_1 \Bigm| \frac {\sqrt{m_2} \sqrt{m_3} - \sqrt{m_1} \sqrt{M_3}} {\sqrt{m_2} \sqrt{m_3} + \sqrt{m_1} \sqrt{M_3}}\Bigr] + \\
\Bigl[m_2 
\Bigm| \frac {\sqrt{m_1} \sqrt{m_3} - \sqrt{m_2} \sqrt{M_3}}{\sqrt{m_1} \sqrt{m_3} + \sqrt{m_2} \sqrt{M_3}}\Bigr] -
\Bigl[m_3 
\Bigm| \frac {\sqrt{m_1} \sqrt{m_2} - \sqrt{m_3} \sqrt{M_3}}{\sqrt{m_1} \sqrt{m_2} + \sqrt{m_3} \sqrt{M_3}}\Bigr]\Bigr).
\end{multline}

The reason for splitting the square roots as in Eq.~\eqref{eq:sunrise-pseudothreshold-symbol} was discussed in sec.~\ref{sec:massive-bubble-2D}.

To find this formula for the symbol we have used
\begin{equation}
\label{eq:sunrise-pseudothreshold-identity}
\frac {(1 - \sqrt{a_1})(1 - \sqrt{a_2})(1 + \sqrt{a_3})}{(1 + \sqrt{a_1})(1 + \sqrt{a_2})(1 - \sqrt{a_3})} = 1,
\end{equation}
where
\begin{gather}
\sqrt{a_1} = \frac {\sqrt{M_3} \sqrt{m_1}}{\sqrt{m_2} \sqrt{m_3}}, \qquad
\sqrt{a_2} = \frac {\sqrt{M_3} \sqrt{m_2}}{\sqrt{m_1} \sqrt{m_3}}, \qquad
\sqrt{a_3} = \frac {\sqrt{M_3} \sqrt{m_3}}{\sqrt{m_1} \sqrt{m_2}}.
\end{gather}

To prove Eq.~\eqref{eq:sunrise-pseudothreshold-identity} we use
\begin{gather}
\sqrt{a_1} \sqrt{a_2} = \frac {M_3}{m_3}, \qquad
\sqrt{a_1} \sqrt{a_3} = \frac {M_3}{m_2}, \qquad
\sqrt{a_2} \sqrt{a_3} = \frac {M_3}{m_1}.
\end{gather}
Extracting the square roots (assuming all masses, including $M_3$, are positive) we find
\begin{gather}
\frac 1 {\sqrt{a_1} \sqrt{a_2}} = \frac {m_3}{M_3}, \qquad
\frac 1 {\sqrt{a_1} \sqrt{a_3}} = \frac {m_2}{M_3}, \qquad
\frac 1 {\sqrt{a_2} \sqrt{a_3}} = \frac {m_1}{M_3}.
\end{gather}

Next, using $M_3 = m_1 + m_2 - m_3$, we find
\begin{equation}
    \frac 1 {\sqrt{a_2} \sqrt{a_3}} + \frac 1 {\sqrt{a_1} \sqrt{a_3}} - \frac 1 {\sqrt{a_1} \sqrt{a_2}} = 1,
\end{equation}
which implies that
\begin{equation}
    \sqrt{a_1} + \sqrt{a_2} - \sqrt{a_3} = \sqrt{a_1} \sqrt{a_2} \sqrt{a_3}.
\end{equation}
Upon expanding the numerator and the denominator of Eq.~\eqref{eq:sunrise-pseudothreshold-identity} and subtracting them, we find the same identity.

Thanks to the Galois symmetry, the square roots in the symbol of Eq.~\eqref{eq:sunrise-pseudothreshold-symbol} are actually not singularities,\footnote{The situation is actually a bit more complicated, see below.} until we take the logarithmic monodromies which remove the first and second symbol entries and leave only the leading singularity $\frac 1 {\sqrt{m_1} \sqrt{m_2} \sqrt{m_3} \sqrt{M_3}}$.

Curiously, the symbol of the sunrise at pseudo-threshold is best written in a form which is \emph{not} manifestly dimensionless.  Indeed, we have, schematically
\begin{equation}
  [m_1 | B_1] + [m_2 | B_2] + [m_3 | B_3],
\end{equation}
with $B_1 B_2 B_3 = 1$. This can be written as
\begin{equation}
[m_1/m_3| B_1] + [m_2/m_3 | B_2],
\end{equation}
at the cost of breaking the symmetry.

\subsubsection*{Prefactor}
\label{sec:sunrise-pseudo-threshold-prefactor}

We want to compute the leading singularity which is an one-dimensional integral left over after cutting all three propagators.  It reads
\begin{equation}
  (-2 \pi i)^3 \int d^2 q_1 d^2 q_2 \delta(q_1^2 - m_1^2) \delta(q_2^2 - m_2^2) \delta(q_3^2 - m_3^2).
\end{equation}
In terms of Leray residues the integral reads (we will drop the factors of $\pi$ in the following)
\begin{equation}
  \int \frac {d^2 q_1 \wedge d^2 q_2}{d (q_1^2 - m_1^2) \wedge d (q_2^2 - m_2^2) \wedge d (q_3^2 - m_3^2)}
\end{equation}
The factor in the denominator reads
\begin{multline}
  d (q_1^2 - m_1^2) \wedge d (q_2^2 - m_2^2) \wedge d (q_3^2 - m_3^2) = \\
  8 {q_1}_{\mu} {q_2}_{\nu} (q_1 + q_2 - p)_{\rho} \Bigl(
  -\epsilon^{\mu \rho} d^2 q_1 \wedge d q_2^\nu +
  \epsilon^{\nu \rho} d q_1^\mu \wedge d^2 q_2\Bigr),
\end{multline}
where we have used $q_3 = p - q_1 - q_2$.  If we wedge this quantity with $\xi \cdot d q_1$, where $\xi$ is a vector, we obtain $8 \epsilon(q_1, \xi) \epsilon(q_2, q_1 - p)$, where we have used $d q_1^\mu d q_1^\nu = \epsilon^{\mu \nu} d^2 q_1$.

We then obtain
\begin{equation}
  \frac {d^2 q_1 \wedge d^2 q_2}{d (q_1^2 - m_1^2) \wedge d (q_2^2 - m_2^2) \wedge d (q_3^2 - m_3^2)} =
  -\frac {\xi \cdot d q_1}{8 \epsilon(q_1, \xi) \epsilon(q_2, q_1 - p)},
\end{equation}
where both expressions are evaluated on the support of $q_e^2 = m_e^2$ for $e = 1, 2, 3$.  In the following we will take $\xi = p$.

We can now use a convenient dual space picture where $q_1 = v_1 - v_0$, $q_2 = v_2 - v_1$, $q_3 = v_3 - v_2$ and $p = q_1 + q_2 + q_3 = v_3 - v_0$.  Since we have three vectors $v_1 - v_0$, $v_2 - v_0$ and $v_3 - v_0$ in a two-dimensional space we have one Gram determinant constraint
\begin{equation}
  \det ((v_i - v_0) \cdot (v_j - v_0))_{i, j = 1, 2, 3} = 0.
\end{equation}
Plugging in $(v_1 - v_0)^2 = m_1^2$, $(v_2 - v_1)^2 = m_2^2$, $(v_3 - v_2)^2 = m_3^2$ and $(v_3 - v_0)^2 = p^2$, together with the notations $x^2 = (v_2 - v_0)^2$ and $y^2 = (v_3 - v_1)^2$ we find
\begin{equation}
  \det
  \begin{pmatrix}
    m_1^2 & \frac 1 2 (m_1^2 + x^2 - m_2^2) & \frac 1 2 (m_1^2 + p^2 - y^2) \\
    \frac 1 2 (m_1^2 + x^2 - m_2^2) & x^2 & \frac 1 2 (x^2 + p^2 - m_3^2) \\
    \frac 1 2 (m_1^2 + p^2 - y^2) & \frac 1 2 (x^2 + p^2 - m_3^2)& p^2
  \end{pmatrix} = 0.
\end{equation}
Expanded out, this equation reads
\begin{multline}
  f(x, y) = -x^4 y^2 - x^2 y^4 +
  2 x^2 y^2 (m_1^2 + m_2^2 + m_3^2 + m_1 m_2 - m_1 m_3 - m_2 m_3) + \\
  x^2 (2 m_1 + m_2 - m_3) (m_2 + m_3) (m_2 - m_3)^2 + \\
  y^2 (2 m+3 - m_1 - m_2) (m_1 + m_2)^2 (m_1 - m_2) - \\
  2 (m_1 m_2 + m_2^2 + m_1 m_3 - m_2 m_3) (m_1 + m_2)^2 (m_2 - m_3)^2 = 0,
\end{multline}
where we have used the pseudo-threshold value $p^2 = (m_1 + m_2 - m_3)^2$.

This is the equation of an elliptic curve with a double point singularity at $x = m_1 + m_2$ and $y = m_2 - m_3$.  We make a change of variables which blows up the singularity at this point and parametrizes the points of the curve by the slope of a pencil of lines through this singular point:
\begin{gather}
  x^2 = (m_1 + m_2)^2 + u, \qquad
  y^2 = (m_2 - m_3)^2 + u v.
\end{gather}
Then the equation of the curve becomes
\begin{equation}
  u^2 \Bigl(((m_1 + m_2)^2 + u) v^2 +
  2 (m_1 m_2 + m_2^2 + m_1 m_3 - m_2 m_3 + u/2) v +
  (m_2 - m_3)^2\Bigr) = 0.
\end{equation}

Going back to the one-form we need to integrate, in terms of the new coordinates $u$ and $v$ we get that
\begin{multline}
  \epsilon(q_1, p) \epsilon(q_2, q_1 - p) = -\det \begin{pmatrix}
    q_1 \cdot q_2 & q_1 \cdot (q_1 - p) \\
    p \cdot q_2 & p \cdot (q_1 - p)
  \end{pmatrix} = \\
  -\det \begin{pmatrix}
    \frac 1 2 (x^2 - m_1^2 - m_2^2) & \frac 1 2 (y^2 + m_1^2 - p^2) \\
    \frac 1 2 (x^2 + y^2 - m_1^2 - m_3^2) & \frac 1 2 (x^2 - p^2 - m_3^2)
  \end{pmatrix} = \\
  -\frac 1 4 (2 m_1 m_2 v + 2 m_2^2 v + 2 m_1 m_3 v - 2 m_2 m_3 v + u v^2 + 2 m_2^2 - 4 m_2 m_3 + 2 m_3^2 + 2 u v) u.
\end{multline}

The differential form becomes:
\begin{equation}
  -\frac 1 4 \frac {d (u v)}{u (2 m_1 m_2 v + 2 m_2^2 v + 2 m_1 m_3 v - 2 m_2 m_3 v + u v^2 + 2 m_2^2 - 4 m_2 m_3 + 2 m_3^2 + 2 u v)}.
\end{equation}

We can solve for $u$ as a function of $v$ from the equation of the curve and eliminate $u$ from the differential form.  Then, we obtain
\begin{equation}
  \frac {d v}{(m_1 + m_2)^2 v^2 + 2 (m_1 m_2 + m_2^2 + m_1 m_3 - m_2 m_3) v + (m_2 - m_3)^2}.
\end{equation}
This can be partial-fractioned to
\begin{equation}
  \frac 1 {\sqrt{\Delta}} \Bigl(\frac {d v}{v - v_+} - \frac {d v}{v - v_-}\Bigr),
\end{equation}
where $\Delta = 16 m_1 m_2 m_3 M_3$.  Taking the residue reproduces the prefactor $\frac 1 {\sqrt{m_1 m_2 m_3 M_3}}$.

\subsubsection*{Analysis of the symbol}
\label{sec:sunrise-pseudo-threshold-analysis}

The first entries in the symbol of Eq.~\eqref{eq:sunrise-pseudothreshold-symbol} are at $m_e = 0$ and correspond to tadpole diagrams.

The logarithmic singularities in the second entry following $m_1$ are at
\begin{equation}
m_2 m_3 - M_3 m_1 = (m_3 - m_1) (m_1 + m_2) = 0.
\end{equation}
We will now show this is the subset of the singularities $p^2 = (m_1 \pm_{1} m_2 \pm_{2} m_3)^2$ for $p^2 = M_3^2$, for which $\alpha_2, \alpha_3 \geq 0$.

The general Landau analysis for a sunrise integral in this kinematics goes as follows.  We have $\alpha_1 q_1 = \alpha_2 q_2 = \alpha_3 q_3 = v$.  Assuming that none of the $\alpha$ vanish we have $q_e = \frac v {\alpha_e}$ and using momentum conservation we have
\begin{equation}
  v = p \Bigl(\alpha_1 ^{-1} + \alpha_2^{-1} + \alpha_3^{-1}\Bigr)^{-1}.
\end{equation}
Using the on-shell conditions we have
\begin{gather}
  \alpha_2 = \pm_1 \alpha_1 \frac {m_1}{m_2}, \qquad
  \alpha_3 = \pm_2 \alpha_1 \frac {m_1}{m_3}.
\end{gather}
Using $\alpha_1 + \alpha_2 + \alpha_3 = 1$ we have
\begin{equation}
  \alpha_1 = \Bigl(1 \pm_1 \frac {m_1}{m_2} \pm_2 \frac {m_1}{m_3}\Bigr)^{-1}.
\end{equation}
It follows that
\begin{equation}
  \alpha_1 ^{-1} + \alpha_2^{-1} + \alpha_3^{-1} =
  \alpha_1^{-1} \Bigl(1 \pm_1 \frac {m_2}{m_1} \pm_2 \frac {m_3}{m_1}\Bigr).
\end{equation}
Finally,
\begin{equation}
  p^2 = v^2 \Bigl(\alpha_1 ^{-1} + \alpha_2^{-1} + \alpha_3^{-1}\Bigr)^2,
\end{equation}
and using, for example, $v^2 = \alpha_1^2 m_1^2$ and the previous identity we find
\begin{equation}
  p^2 = (m_1 \pm_1 m_2 \pm_2 m_3)^2.
\end{equation}

After cutting one propagator (which corresponds to the initial entry $m_1$ in the symbol), the singularities that can arise are those whose as-yet uncut propagators have non-negative $\alpha$.  Indeed, we have
\begin{equation}
(m_1 + m_2 - m_3)^2 - (m_1 + m_2 + m_3)^2 = - 4 m_3 (m_1 + m_2).
\end{equation}

For $p^2 = (m_1 + m_2 + m_3)^2$ we have $\pm_1 = +$ and $\pm_2 = +$ and therefore
\begin{gather}
  \alpha_1 =\Bigl(1 + \frac {m_1}{m_2} + \frac {m_1}{m_3}\Bigr)^{-1}, \\
  \alpha_2 =\Bigl(1 + \frac {m_2}{m_1} + \frac {m_2}{m_3}\Bigr)^{-1}, \\
  \alpha_3 =\Bigl(1 + \frac {m_3}{m_1} + \frac {m_3}{m_2}\Bigr)^{-1}.
\end{gather}
For $m_3 = 0$ we get
\begin{gather}
  \alpha_1 = \alpha_2 = 0, \qquad
  \alpha_3 = 1,
\end{gather}
while for $m_1 + m_2 = 0$ we get
\begin{gather}
  \alpha_1 = \frac {m_3}{m_1}, \qquad
  \alpha_2 = -\frac {m_3}{m_1}, \qquad
  \alpha_3 = 1.
\end{gather}

The $m_3 = 0$ singularity is of square root type.  The branch cuts have to be placed such that the second entry in the part of the symbol in Eq.~\eqref{eq:sunrise-pseudothreshold-symbol} starting by $m_1$ yields square root singularities when $m_3 = 0$ (and also $m_2 = 0$ as we show below) but not when $m_1 = 0$ or $M_3 = 0$.

The logarithmic singularity at $m_1 + m_2 = 0$ arises when $\alpha_2 > 0$ but $\alpha_1$ is allowed to be negative.  Therefore we need to take $m_1 m_3 < 0$.

We also have
\begin{equation}
(m_1 + m_2 - m_3)^2 - (-m_1 + m_2 + m_3)^2 = 4 m_2 (m_1 - m_3),
\end{equation}
whose corresponding $\alpha$ are
\begin{gather}
  \alpha_1 = \Bigl(1 - \frac {m_1}{m_2} - \frac {m_1}{m_3}\Bigr)^{-1}, \\
  \alpha_2 = \Bigl(1 - \frac {m_2}{m_1} + \frac {m_2}{m_3}\Bigr)^{-1}, \\
  \alpha_3 = \Bigl(1 - \frac {m_3}{m_1} + \frac {m_3}{m_2}\Bigr)^{-1}.
\end{gather}
For $m_2 = 0$ we have
\begin{gather}
  \alpha_1 = \alpha_3 = 0, \qquad
  \alpha_2 = 1,
\end{gather}
while for $m_1 - m_3 = 0$ we have
\begin{gather}
  \alpha_1 = -\frac {m_2}{m_1}, \qquad
  \alpha_3 = \frac {m_2}{m_1}, \qquad
  \alpha_2 = 1.
\end{gather}

These factors $(m_1 + m_2)$ and $(m_1 - m_3)$ are precisely the ones arising in
\begin{equation}
(\sqrt{m_2 m_3} - \sqrt{M_3 m_1})(\sqrt{m_2 m_3} + \sqrt{M_3 m_1}) = (m_3 - m_1) (m_1 + m_2).
\end{equation}

Let us now analyze the Hessians that arise for this kinematics.  The solutions for the sunrise Landau equations are $p^2 = (m_1 \pm_1 m_2 \pm_2 m_3)^2$.  If we take $p^2 = (m_1 + m_2 - m_3)^2$, then we have the following possibilities:
\begin{enumerate}
  \item $\pm_1 = +$, $\pm_2 = -$,
  \item $\pm_1 = +$, $\pm_2 = +$
  \item $\pm_1 = -$, $\pm_2 = +$
  \item $\pm_1 = -$, $\pm_2 = -$
\end{enumerate}

In the second case we have
\begin{equation}
  (m_1 + m_2 + m_3)^2 = (m_1 + m_2 - m_3)^2
\end{equation}
which implies that either $m_3 = 0$ or $m_1 + m_2 = 0$.

In the case $m_3 = 0$ we have the critical value solutions
\begin{gather}
  p = (m_1 + m_2, 0), \qquad
  q_1^* = (m_1, 0), \qquad
  q_2^* = (m_2, 0), \qquad
  q_3^* = (0, 0), \\
  \alpha_1^* = 0, \qquad
  \alpha_2^* = 0, \qquad
  \frac {\alpha_1^*}{\alpha_2^*} = \frac {m_2}{m_1}, \qquad
  \alpha_3^* = 1,
\end{gather}
where we have picked the external momentum along the time direction.

We find for the Hessian matrix
\[
  \frac H 2 = \left(
    \begin{array}{cc|cc}
      \eta & \eta & \begin{smallmatrix} m_1 \\ 0 \end{smallmatrix} & \begin{smallmatrix} 0 \\ 0 \end{smallmatrix} \\
      \eta & \eta & \begin{smallmatrix} 0 \\ 0 \end{smallmatrix} & \begin{smallmatrix} m_2 \\ 0 \end{smallmatrix} \\
      \hline
      \begin{smallmatrix} m_1 & 0 \end{smallmatrix} & \begin{smallmatrix} 0 & 0 \end{smallmatrix} & 0 & 0 \\
      \begin{smallmatrix} 0 & 0 \end{smallmatrix} & \begin{smallmatrix} m_2 & 0 \end{smallmatrix} & 0 & 0
    \end{array}
  \right)
\]
The determinant of this matrix vanishes.

If, instead, $m_1 + m_2 = 0$ we find
\begin{gather}
  p = (-m_3, 0), \qquad
  q_1^* = (m_1, 0), \qquad
  q_2^* = (-m_1, 0), \qquad
  q_3^* = (m_3, 0), \\
  \alpha_1^* = \frac {m_3}{m_1}, \qquad
  \alpha_2^* = -\frac {m_3}{m_1}, \qquad
  \alpha_3^* = 1.
\end{gather}

We find for the Hessian matrix
\[
  \frac H 2 = \left(
    \begin{array}{cc|cc}
      \bigl(1 + \frac {m_3}{m_1}\bigr) \eta & \eta & \begin{smallmatrix} m_1 - m_3 \\ 0 \end{smallmatrix} & \begin{smallmatrix} m_3 \\ 0 \end{smallmatrix} \\
      \eta & \bigl(1 - \frac {m_3}{m_1}\bigr) \eta & \begin{smallmatrix} m_3 \\ 0 \end{smallmatrix} & \begin{smallmatrix} -m_1 - m_3 \\ 0 \end{smallmatrix} \\
      \hline
      \begin{smallmatrix} m_1 - m_3 & 0 \end{smallmatrix} & \begin{smallmatrix} m_3 & 0 \end{smallmatrix} & 0 & 0 \\
      \begin{smallmatrix} m_3 & 0 \end{smallmatrix} & \begin{smallmatrix} -m_1 - m_3 & 0 \end{smallmatrix} & 0 & 0
    \end{array}
  \right)
\]
Also,
\[
  \det \frac H 2 = -m_1^2 m_3^2.
\]

Recalling that the prefactor of the integral is $\frac 1 {\sqrt{m_1 m_2 m_3 M_3}}$ and evaluating this at $m_1 + m_2 = 0$ we find $\frac 1 {\sqrt{-m_1^2 m_3^2}}$.  The Hessian above produces the same prefactor since it arises as $\frac 1 {\sqrt{\det H}} \propto \frac 1 {\sqrt{-m_1^2 m_2^2}}$.

For the third case we have
\begin{equation}
  (m_1 - m_2 + m_3)^2 = (m_1 + m_2 - m_3)^2
\end{equation}
which implies either $m_1 = 0$ or $m_2 = m_3$.

If $m_1 = 0$ we have the following critical point solutions
\begin{gather}
  p = (m_2 - m_3, 0), \qquad
  q_1^* = (0, 0), \qquad
  q_2^* = (m_2, 0), \qquad
  q_3^* = (-m_3, 0), \\
  \alpha_1^* = 1, \qquad
  \alpha_2^* = 0, \qquad
  \alpha_3^* = 0, \qquad
  \frac {\alpha_2^*}{\alpha_3^*} = -\frac {m_3}{m_2}.
\end{gather}

Plugging into the expression for the Hessian matrix we find
\[
  \frac H 2 = \left(
    \begin{array}{cc|cc}
      \eta & 0 & \begin{smallmatrix} -m_3 \\ 0 \end{smallmatrix} & \begin{smallmatrix} -m_3 \\ 0 \end{smallmatrix} \\
      0 & 0 & \begin{smallmatrix} -m_3 \\ 0 \end{smallmatrix} & \begin{smallmatrix} m_2 - m_3 \\ 0 \end{smallmatrix} \\
      \hline
      \begin{smallmatrix} -m_3 & 0\end{smallmatrix} & \begin{smallmatrix} - m_3 & 0 \end{smallmatrix} & 0 & 0 \\
      \begin{smallmatrix} -m_3 & 0\end{smallmatrix} & \begin{smallmatrix} m_2 - m_3 & 0 \end{smallmatrix} & 0 & 0
    \end{array}
  \right)
\]

We have $\det H = 0$.  This eventually produces a shift in the Landau exponent and leads to a singularity behavior $(m_1^2)^{-\frac 1 4}$.

The other solution is $m_2 = m_3$.  At the critical point corresponding to this solution we have
\begin{gather}
  p = (m_1, 0), \qquad
  q_1^* = (m_1, 0), \qquad
  q_2^* = (-m_2, 0), \qquad
  q_3^* = (m_2, 0), \\
  \alpha_1^* = 1, \qquad
  \alpha_2^* = -\frac {m_1}{m_2}, \qquad
  \alpha_3^* = \frac {m_1}{m_2}.
\end{gather}

Plugging into the general expression for the Hessian matrix we obtain
\[
  \frac H 2 = \left(
    \begin{array}{cc|cc}
      \bigl(1 + \frac {m_1}{m_2}\bigr) \eta & \frac {m_1}{m_2} \eta & \begin{smallmatrix} m_1 + m_2 \\ \end{smallmatrix} & \begin{smallmatrix} m_2 \\ 0 \end{smallmatrix} \\
      \frac {m_1}{m_2} \eta & 0 & \begin{smallmatrix} m_2 \\ 0 \end{smallmatrix} & \begin{smallmatrix} 0 \\ 0 \end{smallmatrix} \\
      \hline
      \begin{smallmatrix} m_1 + m_2 & 0 \end{smallmatrix} & \begin{smallmatrix} m_2 & 0 \end{smallmatrix} & 0 & 0 \\
      \begin{smallmatrix} m_2 & 0 \end{smallmatrix} & \begin{smallmatrix} 0 & 0 \end{smallmatrix} & 0 & 0
    \end{array}
  \right)
\]

We have $\det \frac H 2 = -m_1^2 m_2^2$ which contributes a factor $\frac 1 {\sqrt{-m_1^2 m_2^2}}$ to the prefactor of the integral.  The computed prefactor $\frac 1 {\sqrt{m_1 m_2 m_3 M_3}}$ yields $\frac 1 {\sqrt{m_1^2 m_2^2}}$ when setting $m_3 = m_2$.

In the fourth case we have
\[
  (m_1 - m_2 - m_3)^2 - (m_1 + m_2 - m_3)^2 = 0.
\]
This implies that either $m_2 = 0$ or $m_1 = m_3$.

In the case $m_2 = 0$ we have the following critical point solutions
\begin{gather}
  p = m_1 - m_3, \qquad
  q_1^* = (m_1, 0), \qquad
  q_2^* = (0, 0), \qquad
  q_3^* = (-m_3, 0), \\
  \alpha_1^* = 0, \qquad
  \alpha_2^* = 1, \qquad
  \alpha_3^* = 0, \qquad
  \frac {\alpha_1^*}{\alpha_3^*} = -\frac {m_3}{m_1}.
\end{gather}

Plugging into the expression for the Hessian matrix we find
\[
  \frac H 2 = \left(
    \begin{array}{cc|cc}
      0 & 0 & \begin{smallmatrix} m_1 - m_3 \\ 0\end{smallmatrix} & \begin{smallmatrix} -m_3 \\ 0\end{smallmatrix} \\
      0 & \eta & \begin{smallmatrix} -m_3 \\ 0\end{smallmatrix} & \begin{smallmatrix} -m_3 \\ 0\end{smallmatrix} \\
      \hline
      \begin{smallmatrix} m_1 - m_3 & 0\end{smallmatrix} & \begin{smallmatrix} -m_3 & 0\end{smallmatrix} & 0 & 0 \\
      \begin{smallmatrix} -m_3 & 0\end{smallmatrix} & \begin{smallmatrix} -m_3 & 0\end{smallmatrix} & 0 & 0
    \end{array}
  \right)
\]

We have $\det H = 0$ and this ultimately produces a shift of the Landau exponent and a singularity of type $(m_2^2)^{-\frac 1 4}$.

Finally, when $m_1 = m_3$ we have the critical point solutions
\begin{gather}
  p = (m_2, 0), \qquad
  q_1^* = (-m_1, 0), \qquad
  q_2^* = (m_2, 0), \qquad
  q_3^* = (m_1, 0), \\
  \alpha_1^* = -\frac {m_2}{m_1}, \qquad
  \alpha_2^* = 1, \qquad
  \alpha_3^* = \frac {m_2}{m_1}.
\end{gather}

Plugging into the general expression for the Hessian matrix yields
\[
  \frac H 2 = \left(
    \begin{array}{cc|cc}
      0 & \frac {m_2}{m_1} \eta & \begin{smallmatrix} 0\\ 0\end{smallmatrix} & \begin{smallmatrix} m_1 \\ 0\end{smallmatrix} \\
      \frac {m_2}{m_1} \eta & \bigl(1 + \frac {m_2}{m_1}\bigr) \eta & \begin{smallmatrix} m_1\\ 0\end{smallmatrix} & \begin{smallmatrix} m_1 + m_2\\ 0\end{smallmatrix} \\
      \hline
      \begin{smallmatrix} 0 & 0\end{smallmatrix} & \begin{smallmatrix} m_1 & 0\end{smallmatrix} & 0 & 0 \\
      \begin{smallmatrix} m_1 & 0\end{smallmatrix} & \begin{smallmatrix} m_1 + m_2 & 0\end{smallmatrix} & 0 & 0
    \end{array}
  \right)
\]

We have $\det \frac H 2 = -m_1^2 m_2^2$.  This agrees with the explicit computation of the prefactor since when $m_3 = m_1$ we have $\frac 1 {\sqrt{m_1 m_2 m_3 M_3}} = \frac 1 {\sqrt{m_1^2 m_2^2}}$.

\section{More Examples}

\subsection{The Massless Box}
\label{sec:massless-box}

Consider a massless box integral with external null momenta $p_i$ and internal momenta $q_i$, $i = 1, \dotsc, 4$ and momentum conservation $p_1 = q_1 - q_4$, $p_2 = q_2 - q_1$, $p_3 = -q_3 - q_2$, $p_4 = q_3 - q_4$.

First, the box integral satisfies the tadpole Landau equations.  These equations are not usually considered but, in the massive case, they control the singularity $m \to 0$.  The tadpole Landau loop equations read $\alpha_i q_i = 0$ (no summation on $i$) and can be solved by $\alpha_i = 1$ and $q_i = 0$.  This happens for all values of the external kinematics and corresponds to a permanent pinch.  In turn, this can translate into IR divergences (see ref.~\cite{10.1063/1.1724268}).

Next, we have bubble Landau singularities, in the $s$ and $t$ channels.  Their corresponding singularities are $s = 0$ and $t = 0$.  In this case the loop momenta at the singularity are proportional to the external momentum but the $\alpha$ (or the momentum fractions) are not uniquely determined.  This is not a simple pinch but a more complicated (and less studied) type of singularity.

Consider the triangle Landau singularity with $q_4$ contracted (the other cases can be derived by symmetry).  The Landau loop equations read $\alpha_1 q_1 + \alpha_2 q_2 - \alpha_3 q_3 = 0$.  Dotting with $q_1$, $q_2$ and $-q_3$ we find
\begin{equation}
  \begin{pmatrix}
    0 & 0 & -\frac t 2 \\
    0 & 0 & 0 \\
    -\frac t 2 & 0 & 0 \\
  \end{pmatrix}
    \begin{pmatrix} \alpha_1 \\ \alpha_2 \\ \alpha_3\end{pmatrix} = 0.
\end{equation}
If we take $t \neq 0$ then we have $\alpha_1 = \alpha_3 = 0$, $\alpha_2 = 1$ and therefore $q_2 = 0$.

Finally, consider the box singularity.  Solving the box Landau equations $\alpha_1 q_1 + \alpha_2 q_2 - \alpha_3 q_3 - \alpha_4 q_4 = 0$ by dotting them with $q_1$, $q_2$, $-q_3$ and $-q_4$ we obtain
\begin{equation}
  \begin{pmatrix}
    0 & 0 & -\frac t 2 & 0 \\
    0 & 0 & 0 & \frac s 2 \\
    -\frac t 2 & 0 & 0 & 0 \\
    0 & \frac s 2 & 0 & 0
  \end{pmatrix}
  \begin{pmatrix} \alpha_1 \\ \alpha_2 \\ \alpha_3 \\ \alpha_4\end{pmatrix} = 0.
\end{equation}
The condition for these equations to have non-trivial solutions in $\alpha$ is the vanishing of the determinant which implies $s t = 0$.  When $s = 0$ we have $\alpha_1, \alpha_3$ undetermined and $\alpha_2 = \alpha_4 = 0$.  When $t = 0$ we have $\alpha_2, \alpha_4$ undetermined and $\alpha_1 = \alpha_3 = 0$.

We emphasize that even though two of the $\alpha$ vanish, this does not mean that we consider the singularity of the contracted graph, with the edges corresponding to the vanishing $\alpha$ being contracted.  In particular, we still impose the on-shell conditions for the edges with vanishing $\alpha$ (see also the discussion ref.~\cite{pham1968singularities}).

Note that the Landau locus of the box integral is not an irreducible variety; it splits into $s = 0$ and $t = 0$.  In particular, it has singularities at the same locations as the $s$- and $t$-channel bubble singularities.  The box singularity appears to be a hierarchically allowed singularity of the $s$- and $t$-channel cut integrals.  This feature could in principle explain naive violations of the Steinmann rule which forbids cuts in overlapping channels; in this case the singularities would arise from a hierarchical singularity, not from an overlapping channel.

Let us now examine the internal kinematics at the location of the box singularity.  The on-shell conditions $q_i^2 = 0$ can be solved (over the complex numbers) in two different ways, since momentum conservation at the three-point vertices can be solved in two ways (either the $\lambda$ or the $\tilde{\lambda}$ are proportional at a vertex).

Picking the vertex at momentum $p_1$ to have $\tilde{\lambda}$ proportional, we have
\begin{gather}
  q_1 = \beta_1 \lambda_2 \tilde{\lambda}_1, \qquad
  q_2 = \beta_2 \lambda_2 \tilde{\lambda}_3, \qquad
  q_3 = \beta_3 \lambda_4 \tilde{\lambda}_3, \qquad
  q_4 = \beta_4 \lambda_4 \tilde{\lambda}_1,
\end{gather}
where
\begin{gather}
  \beta_1 = \frac {\langle 1 4\rangle}{\langle 2 4\rangle} = -\frac {[23]}{[13]}, \qquad
  \beta_2 = \frac {[12]}{[13]} = -\frac {\langle 3 4\rangle}{\langle 2 4\rangle}, \\
  \beta_3 = -\frac {\langle 2 3\rangle}{\langle 2 4\rangle} = \frac {[14]}{[13]}, \qquad
  \beta_4 = -\frac {\langle 1 2\rangle}{\langle 2 4\rangle} = \frac {[34]}{[13]}.
\end{gather}
In complex kinematics, we can satisfy $s \to 0$ either by $\langle 1 2 \rangle = 0$ or $[1 2] = 0$.  The choice $\langle 1 2\rangle = 0$ and the choice of internal kinematics impose $[3 4] = 0$.  Plugging in these values we obtain
\begin{gather}
  \beta_1 = 1, \qquad
  \beta_3 = 1, \qquad
  \beta_4 = 0.
\end{gather}
This means that momentum $q_4$ vanishes.

Similarly, we could take $[12] = 0$ (which is the other way to satisfy the condition $s = 0$) and the corresponding condition $\langle 3 4\rangle = 0$.  In this case, we have $q_2 = 0$ instead.

A similar analysis can be done for the case where the $\lambda$ (instead of the $\tilde{\lambda}$) are proportional at the vertex where $p_1$ is attached and also when analyzing the $t = 0$ singularity.  In all these cases the conclusion is that there is one internal momentum which vanishes at the location of the box Landau singularity.

In conclusion, we have encountered several types of singularities
\begin{enumerate}
\item permanent pinches, which arise for all values of external kinematics.  These are the tadpoles and the triangle singularities.
\item pinches which are not simple (bubble singularities), that is they don't happen at a single point in the integration domain.
\item Landau singularities which yield reducible varieties like $s t = 0$ for the box singularity.
\item coincident singularities: the $q = 0$ singularity can arise for the tadpole, for a special value in the bubble singularity, for the triangle singularity and for the box singularity.
\end{enumerate}

The last case does not appear to have been studied in the mathematical literature, perhaps due to the degree of fine-tuning it requires.  The constructions of refs.~\cite{pham,pham2011singularities} have not been done for this case.  It is tempting to conjecture that in such cases we obtain a behavior of $\log^2 s$ type (see the end of sec.~\ref{sec:rational-monodromy}, remark~\ref{rem:log-squared} for a parallel discussion for iterated integrals).  Nevertheless, integrals with $\log^2$ behavior are very interesting for physics and have been studied extensively (see refs.~\cite{Sudakov:1954sw, Collins:1980ih}).

\subsection{A Triangle Integral}
\label{sec:a-triangle-int}

In this section we will study the triangle integral in fig.~\ref{fig:a-triangle-int} in four dimensions
\begin{equation}
  \label{eq:a-triangle-int}
  I = \int \frac {d^4 q_1}{(q_1^2 - m^2) (q_2^2 - m^2) (q_3^2 - m^2)}.
\end{equation}

\begin{figure}
  \centering
  \includegraphics{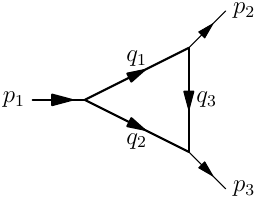}
  \caption{A triangle integral with one external massive leg and internal propagators of equal mass $m$.}
  \label{fig:a-triangle-int}
\end{figure}

This integral is interesting for several reasons.  First, it is a finite integral so it does not require a regularization.  Second, it can be computed explicitly in terms of simple functions so its singularity structure can be verified explicitly.  Third, this integral is an example of integral whose parameters are ``fine-tuned'' so the usual results quoted in the mathematical physics literature about the structure of singularities do not apply in this case.  Finally, thought as a Landau diagram, this appears in the calculation of the double box of ref.~\cite{Caron-Huot:2014lda} (which was also studied in ref.~\cite{Hannesdottir:2024hke} from the point of view of the Landau bootstrap).

Without much loss of generality we pick a frame where\footnote{Since one of the momentum components is $\sqrt{s}$, it will pick up a sign under a monodromy $s = e^{2 \pi i} s$, however this corresponds to a parity symmetry on the momentum components.}
\begin{gather}
  p_1 = (\sqrt{s}, \mathbf{0}), \\
  p_2 = (\frac {\sqrt{s}} 2, \frac {\sqrt{s}} 2, \mathbf{0}), \\
  p_3 = (\frac {\sqrt{s}} 2, -\frac {\sqrt{s}} 2, \mathbf{0}),
\end{gather}
which satisfies $p_1^2 = s$, $p_2^2 = p_3^2 = 0$ and $p_1 = p_2 + p_3$.

We have three tadpole Landau singularities $q_e^2 = m^2$ and $\alpha_e q_e = 0$ which implies $q_e = 0$ and $m = 0$.  Next, we have three bubble Landau singularities.  The simplest one is the $q_1, q_2$ Landau singularity which leads to a Landau singularity $p_1^2 = s = 4 m^2$ in the physical region and $p_1^2 = s = 0$, a pseudothreshold singularity outside of the physical region.  The Landau singularity at $s = 4 m^2$ is a simple pinch where $q_1 = q_2 = \frac 1 2 p_1$ so the loop momenta are uniquely determined.  However, the pseudothreshold singularity is \emph{not} of simple pinch type.  Indeed, solving the Landau equations yields $q_1 + q_2 = 0$, $p_1 = 0$ but the loop momenta $q_1$ and $q_2$ are not uniquely determined.

There are also two bubble Landau singularities in a massless channel.  For definiteness we choose one with momenta $q_1, q_3$.  The Landau equations are $\alpha_1 q_1 = \alpha_3 q_3$.  With $q_1 - q_3 = p_2$ and $p_2^2 = 0$.  Here we have a singularity in the physical region when $m = 0$, but it is not of simple pinch type.  Also, we have a second type singularity at $s = 0$.

For the triangle Landau locus we need to solve the Landau loop equation $\alpha_1 q_1 - \alpha_2 q_2 + \alpha_3 q_3 = 0$ together with the on-shell constraints $q_e^2 = m^2$.  If we dot the Landau loop equation into $q_1$, $-q_2$ and $q_3$ respectively, we obtain the linear system
\begin{equation}
    \begin{pmatrix}
        m^2 & \frac{1}{2} \left(2 m^2-s\right) & m^2 \\
 \frac{1}{2} \left(2 m^2-s\right) & m^2 & m^2 \\
 m^2 & m^2 & m^2
    \end{pmatrix}
    \begin{pmatrix}
        \alpha_1 \\ \alpha_2 \\ \alpha_3
    \end{pmatrix} =
    \begin{pmatrix}
        0 \\ 0 \\ 0
    \end{pmatrix},
\end{equation}
which has a non-trivial solution only when the determinant of the $3 \times 3$ matrix vanishes.  This determinant is $-\frac 1 4 s m^2$.  There are two solutions, which correspond to distinct internal configurations.  The $s = 0$ solution is a second type singularity where $\alpha_1 + \alpha_2 + \alpha_3 = 0$.  The $m = 0$ solution corresponds to $\alpha_1 = \alpha_2 = 0$ and $\alpha_3 = 1$.

The Landau singularity in the physical region at $m = 0$ occurs at $q_1 = p_2$, $q_2 = p_3$ and $q_3 = 0$.  This has a unique solution, but which is at the boundary of the set of solutions for the bubble cut in the massless channels.  In favorable cases the singularities can be separated, but that is not the case here.

\section{Monodromy}

\subsection{Monodromy for iterated integrals of rational differential forms}
\label{sec:rational-monodromy}

The following result is very important when computing the monodromy of an iterated integral around a branch point (see ref.~\cite[prop.~1.4]{goncharov2001multiplepolylogarithmsmixedtate}).

\begin{theorem}
\label{thm:iterated_residue}
Consider an iterated integral with forms $\omega_1, \dotsc, \omega_l$, such that the form $\omega_p$ has a pole along a codimension one variety $S$ and no other forms have a singularity there. Next, consider two paths $\gamma_\pm$ with the same end points and such that they go around $S$ in opposite ways such that $\gamma_+ \gamma_-^{-1}$ goes around $S$ in the counter-clockwise orientation.  Then, we have
\begin{multline}
  \label{eq:iterated_residue}
    \int_{\gamma_+} \omega_1 \circ \cdots \circ \omega_l - \int_{\gamma_-} \omega_1 \circ \cdots \circ \omega_l =\\ 2 \pi i\operatorname{res} \omega_p \int_{\gamma'} \omega_1 \circ \cdots \circ \omega_{p-1} \int_{\gamma''} \omega_{p + 1} \circ \cdots \circ \omega_l,
\end{multline}
where $\gamma'$ is the initial section of the path until $S$ and $\gamma''$ is the final section of the path $\gamma$ starting at $S$ and ending at the end-point of $\gamma$.
\end{theorem}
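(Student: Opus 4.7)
The plan is to realise the difference of the two iterated integrals as the effect of a single small monodromy loop around $S$, and then to isolate the only block in a shuffle-type expansion that survives in the shrinking-loop limit. First I would deform $\gamma_+$ and $\gamma_-$ so that they coincide except in a small neighbourhood of a chosen point $s_0 \in S$ that both paths cross. Write $\gamma_\pm = \gamma' \cdot c_\pm \cdot \gamma''$, where $c_\pm$ are small arcs of radius $\epsilon$ about $s_0$ going around $S$ on opposite sides, and $\gamma'$, $\gamma''$ are the common initial and final segments. By construction $c = c_+ \cdot c_-^{-1}$ is a small counter-clockwise loop around $S$.

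Next I would apply the path composition property of iterated integrals to each of $\gamma_\pm$,
\begin{equation*}
    \int_{\gamma_\pm} \omega_1 \circ \cdots \circ \omega_l = \sum_{0 \le i \le j \le l} \int_{\gamma'} \omega_1 \circ \cdots \circ \omega_i \cdot \int_{c_\pm} \omega_{i+1} \circ \cdots \circ \omega_j \cdot \int_{\gamma''} \omega_{j+1} \circ \cdots \circ \omega_l,
\end{equation*}
with the usual convention that an empty iterated integral equals $1$. Subtracting, the $\gamma'$- and $\gamma''$-factors drop out unchanged, so everything reduces to computing $\int_{c_+} - \int_{c_-}$ on each middle block $\omega_{i+1} \circ \cdots \circ \omega_j$ as $\epsilon \to 0$.

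The technical heart is a scaling estimate for these blocks. In local coordinates where $S = \{z = 0\}$, parametrising the arc by $z = \epsilon e^{i\theta}$ shows that a regular form pulls back as $O(\epsilon)\, d\theta$ while $\omega_p$ pulls back as an $O(1)$ form (schematically $dz/z = i\, d\theta$). By induction on $j - i$ one then checks that $\int_{c_\pm} \omega_{i+1} \circ \cdots \circ \omega_j$ is $O(\epsilon^{j-i})$ when the block contains no $\omega_p$, and $O(\epsilon^{j-i-1})$ when it contains $\omega_p$ together with at least one other form. Blocks with no pole actually give $\int_{c_+} - \int_{c_-} = 0$ even before shrinking, by homotopy invariance, since the integrand is holomorphic in a neighbourhood of $S$. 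Consequently, in the limit $\epsilon \to 0$ only the block consisting of the single form $\omega_p$, i.e.\ $i = p - 1$ and $j = p$, contributes.

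Finally, for this surviving term, $\int_{c_+} \omega_p - \int_{c_-} \omega_p = \int_c \omega_p = 2\pi i \operatorname{res} \omega_p$ at $s_0$ by the ordinary residue theorem, and substituting this back into the sum above yields the right-hand side of~\eqref{eq:iterated_residue}. I expect the main obstacle to be the rigorous scaling estimate in the inductive step: one has to control iterated integrals whose inner endpoints slide along the shrinking arc, and this requires the remaining $\omega_k$ to be holomorphic in a full neighbourhood of $s_0$---precisely the place where the hypothesis that $\omega_p$ is the \emph{only} form singular along $S$ is used in an essential way.
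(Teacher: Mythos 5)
Your proposal is correct and follows essentially the same route as the paper: decompose $\gamma_\pm = \gamma' C_\pm \gamma''$, expand by the concatenation property, argue that all middle blocks except the single form $\omega_p$ vanish as the arc radius shrinks, and evaluate the surviving term by the residue theorem. Your scaling estimate (regular forms pull back as $O(\epsilon)\,d\theta$ while $\omega_p$ pulls back as $O(1)$) is in fact a slightly sharper version of the paper's simplex-volume argument, but the content is the same.
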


\begin{proof}
We can decompose $\gamma_+ = \gamma' C_+ \gamma''$ and $\gamma_- = \gamma' C_- \gamma''$, where $C_\pm$ is are half-circles of small radius.  We will send this radius to zero at the end.

We will use the following two properties
\begin{gather}
    \label{eq:reversal}
    \int_{\gamma} \omega_1 \circ \cdots \circ \omega_m = (-1)^m \int_{\gamma^{-1}} \omega_m \circ \cdots \circ \omega_1, \\
    \label{eq:concatenation}
    \int_{\gamma_1 \gamma_2} \omega_1 \circ \cdots \circ \omega_m = \sum_{i = 0}^m \int_{\gamma_1} \omega_1 \circ \cdots \circ \omega_i \int_{\gamma_2} \omega_{i + 1} \circ \cdots \circ \omega_m
\end{gather}
When using the second property to expand the iterated integrals over $\gamma_\pm = \gamma' C_\pm \gamma''$ the terms where there are multiple differential forms on the $C_\pm$ section of the path will vanish in the limit of zero radius.  Indeed, if there are $q$ one-forms along $C_\pm$ the integration is over a $q$-simplex whose volume is proportional to the $q$-th power of the radius.  The only way for the integral not to vanish in this limit is to keep only the one-form which has a pole there.

Finally, using the fact that $\gamma_+ \gamma_-^{-1} = C_+ C_-^{-1}$ is oriented counter-clockwise, and the two properties above, we have
\begin{equation}
    \int_{C_+} \omega_p - \int_{C_-} \omega_p =
    \int_{C_+ C_-^{-1}} \omega_p =
    2 \pi i \operatorname{res} \omega_p.
\end{equation}

The expected result follows immediately.
\end{proof}

\begin{remark}
\label{rmk:independence}
The path $\gamma'$ ends on the polar variety $S$ while $\gamma''$ starts there.  By integrability we can move along $S$ the point around which we take the circle $C_+ C_-^{-1}$.  Said differently, the right hand side expression in Theorem.~\ref{thm:iterated_residue} does not depend on the point on $S$ where the residue is evaluated and where the path $\gamma'$ ends and $\gamma''$ begins.
\end{remark}
\begin{proof}
To illustrate the independence on the location on the polar locus, consider a length two iterated integral $\int_\gamma \omega_1 \circ \omega_2$ and take the difference
\begin{equation}
    \int_{\gamma_+} \omega_1 \circ \omega_2 - 
    \int_{\gamma_-} \omega_1 \circ \omega_2 =
    2 \pi i \operatorname{res} \omega_2 \int_{\gamma'} \omega_1.
\end{equation}
We can take a differential of the right hand side with respect to the end point of the path $\gamma'$.  This point belongs to the polar locus $S$.  When taking the differential we obtain two terms
\begin{equation}
    (\operatorname{res} \omega_2) \omega_1 + (d \operatorname{res} \omega_2) \int_{\gamma'} \omega_1.
\end{equation}
The first term vanishes since it can be obtained by taking the residue of the integrability condition $\omega_1 \wedge \omega_2 = 0$.  The second term vanishes since the residue of a closed form is a closed form.  Indeed, if $\omega$ has a pole of order one along $S$ and $S$ has equation $s = 0$, then we can write
\begin{equation}
    \omega = \rho \wedge \frac{d s}{s} + \sigma,
\end{equation}
where $\rho$, $\sigma$ are regular forms (they have no singularities on $S$) and $\rho$ is closed.  Indeed, if $d \rho \neq 0$ when restricted to $S$, would mean that $d \omega$ has a pole along $S$.  This is absurd since $d \omega = 0$.
\end{proof}

\begin{remark}\label{rem:log-squared}
This theorem can be generalized to the case when several of the forms $\omega$ have a pole along the same variety $S$.  The basic idea is that if a number of one-forms have a pole at the same location, then their iterated integral over a very small circle yields a non-vanishing contribution in the limit where the radius of the circle is sent to zero.  For example
\[
    \lim_{\epsilon \to 0} \int_{C_\epsilon} \omega_1 \circ \omega_2 = \frac 1 2 (2 \pi i)^2 \operatorname{res} \omega_1 \operatorname{res} \omega_2.
\]
To show this, it is enough to take $\omega_1 = \omega_2 = \frac {d z} z$.  After a change of coordinates $z = e^{i \theta}$ we have $\omega_1 = \omega_2 = i d \theta$ and the iterated integral can be computed easily
\[
    \int_{0}^{2 \pi} d \theta \circ d \theta = \int_0^{2 \pi} d \theta_1 \int_{\theta_1}^{2 \pi} d \theta_2 = \frac 1 2 (2 \pi)^2.
\]
\end{remark}

We want to emphasize that the result of theorem~\ref{thm:iterated_residue} implies that an iterated integral has a monodromy (and therefore a singularity) not only in the first entry but in all entries.  Imposing the conditions on which of these singularities are physical (or visible) and which ones are not constrains the terms proportional to powers of $\pi$ (which have been called ``beyond the symbol'' terms).

\subsection{Monodromy with square roots}
\label{sec:monodromy-sqrt}

\begin{figure}
  \centering
  \begin{subfigure}[b]{0.45\textwidth}
    \centering
    \includegraphics[width=\textwidth]{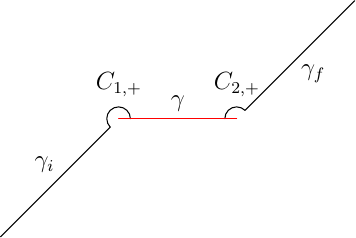}
  \end{subfigure}
  \hfill
  \begin{subfigure}[b]{0.45\textwidth}
    \centering
    \includegraphics[width=\textwidth]{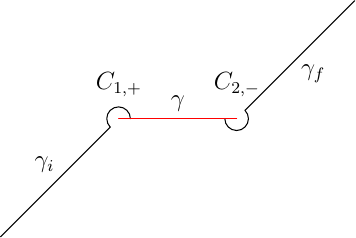}
  \end{subfigure}
  \vspace{\baselineskip}
    \begin{subfigure}[b]{0.45\textwidth}
    \centering
    \includegraphics[width=\textwidth]{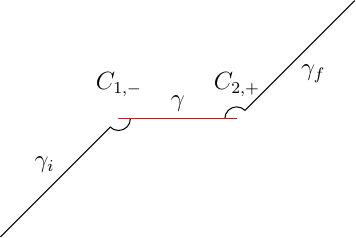}
  \end{subfigure}
  \hfill
  \begin{subfigure}[b]{0.45\textwidth}
    \centering
    \includegraphics[width=\textwidth]{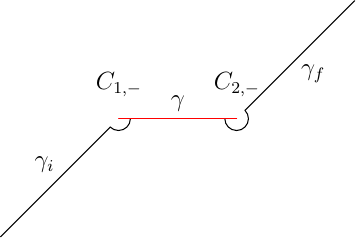}
  \end{subfigure}
  \caption{The possible ways of avoiding two square root branch points.  The branch cut joining the two branch points is depicted in red.}
  \label{fig:sqrt-contours}
\end{figure}

Consider now a different situation than in sec.~\ref{sec:rational-monodromy} where we have two square root branch points (see fig.~\ref{fig:sqrt-contours}).  As can be seen in fig.~\ref{fig:sqrt-contours}, there are four different contours depending on how the branch points are avoided.  We denote them as follows
\begin{gather}
  \gamma_{++} = \gamma_i C_{1,+} \gamma C_{2,+} \gamma_f, \\
  \gamma_{+-} = \gamma_i C_{1,+} \gamma C_{2,-} \gamma_f, \\
  \gamma_{-+} = \gamma_i C_{1,-} \gamma C_{2,+} \gamma_f, \\
  \gamma_{++} = \gamma_i C_{1,-} \gamma C_{2,-} \gamma_f,
\end{gather}
where $\gamma_i$ is the initial part of the contour from the initial point to the first branch point, $C_{1,\pm}$ are the arcs around the first branch point, going above or below, respectively, and similarly for $C_{2,\pm}$.  Finally, $\gamma$ is the contour between the arc around the first branch point and the arc around the second branch point while $\gamma_f$ is the final part of the contour ending at the endpoint of the path.  This contour, together with differential one-forms $\omega_1, \dotsc, \omega_l$, defines a function $F(x)$, where $x$ is the coordinate of the endpoint of the path via the following iterated integral
\begin{equation}
  F(x) = \int_{\gamma_{++}} \omega_1 \circ \cdots \circ \omega_l.
\end{equation}
In practice it may happen that $\gamma_i$ is missing and the contour starts at one of the square root branch points.

The monodromy around the first branch point is given by
\begin{equation}
  (1 - M_1) F(x) = \int_{\gamma_{++}} \omega_1 \circ \cdots \circ \omega_l -
  \int_{\gamma_{-+}} \omega_1 \circ \cdots \circ \omega_l,
\end{equation}
while the monodromy around the second branch point is
\begin{equation}
  (1 - M_2) F(x) = \int_{\gamma_{++}} \omega_1 \circ \cdots \circ \omega_l -
  \int_{\gamma_{+-}} \omega_1 \circ \cdots \circ \omega_l.
\end{equation}

The general form of the differential forms we will study is
\begin{equation}
  \label{eq:diff-form-2sqrt}
  \omega_i = d \log \frac {\sqrt{x - a_i} + \sqrt{x - b_i}}{\sqrt{x - a_i} - \sqrt{x - b_i}} = \frac {d x}{\sqrt{x - a_i} \sqrt{x - b_i}}.
\end{equation}

The integrals around the arcs of circles vanish in the limit of zero radius; the integrand behaves like the inverse of square root, which is an integrable singularity.  For this reason we will drop the arcs of circle in the following.

However, going around one of these small circles from below, has the effect of changing the sign of a differential form which has a square root branch point there.  As can be found from Eq.~\eqref{eq:diff-form-2sqrt},
\begin{equation}
  M_w \omega_i = \begin{cases}
    -\omega_i, &\qquad \text{if $w = a_i$ or $w = b_i$}, \\
    \omega_i, &\qquad \text{otherwise}.
  \end{cases}
\end{equation}

Using the concatenation property (Eq.~\eqref{eq:concatenation}) we have
\begin{multline}
  \label{eq:sqrt-monodromy}
  (1 - M_1) F(x) =
  \sum_{k = 0}^l \int_{\gamma_i} \omega_1 \circ \cdots \circ \omega_{k} \times \\\Bigl(
  \int_{\gamma \gamma_f} \omega_{k + 1} \circ \cdots \circ \omega_l -
  \int_{\gamma \gamma_f} M_1(\omega_{k + 1}) \circ \cdots \circ M_1(\omega_l)\Bigr).
\end{multline}
Obviously, only the terms for which an odd number of signs appear from $M_i(\omega_j)$ contribute to the answer.  A similar formula holds for $M_2$.  This is the square root analog of Eq.~\eqref{eq:iterated_residue}.

\subsection{An example}
\label{sec:example}

Let us now apply the methods of sec.~\ref{sec:monodromy-sqrt} to an example.  Consider the function
\begin{equation}
  F(x) = \frac 1 2 \log^2 \frac {\sqrt{x - a} + \sqrt{x - b}}{-\sqrt{x - a} + \sqrt{x - b}},
\end{equation}
which is really the integral of sec.~\ref{sec:a-triangle-int}.  We have $F(a) = 0$ which means we can think of $F(x)$ as an iterated integral along a path which starts at $a$ and ends at $x$.

We have
\begin{equation}
  F(x) = \int_{\gamma_{++}} \omega \circ \omega,
\end{equation}
where $\gamma_{++}$ is the piecewise linear path with arcs of circle around the square root branch points $x = a$ and $x = b$ (as in fig.~\ref{fig:sqrt-contours}) and such that $\gamma_i$ is empty.  We also have\footnote{Here we treat $a_i$ and $b_i$ as constants.  In practice, we are dealing with multi-variable functions and $a_i$, $b_i$ depend on other variables which produce extra contributions (such as poles) when taking the differential.  We will ignore these complications for now, to keep the presentation simple.}
\begin{equation}
  \omega =
  d \log \frac {\sqrt{x - a} + \sqrt{x - b}}{\sqrt{x - a} - \sqrt{x - b}} =
  d \log \frac {\sqrt{x - a} + \sqrt{x - b}}{-\sqrt{x - a} + \sqrt{x - b}} =
  \frac {d x}{\sqrt{x - a} \sqrt{x - b}}.
\end{equation}
Then, using the result in Eq.~\eqref{eq:sqrt-monodromy}, we find that $(1 - M_a) F(x) = 0$, that is the singularity at $x = a$ is not visible.

To compute the monodromy around $x = a$ we have $\gamma_{++} = C_{1, +} \gamma C_{2, +} \gamma_f$ and $\gamma_{-+} = C_{1, -} \gamma C_{2, +} \gamma_f$.  The $C_{1, \pm}$ sectors do not contribute to $\epsilon \to 0$.  Neglecting terms which vanish in the $\epsilon \to 0$ limit we obtain
\begin{equation}
  (1 - M_a) F(x) = \int_{\gamma C_{2,+} \gamma_f} (\omega \circ \omega - M_a(\omega) \circ M_a(\omega)) = 0,
\end{equation}
since $M_a(\omega) = -\omega$.  This is the mechanism by which the pseudo-threshold singularity is invisible on the main sheet.

We also have
\begin{equation}
  (1 - M_b) F(x) = \Bigl(\int_\gamma \omega\Bigr) \Bigl(\int_{\gamma_f} (\omega - M_b(\omega))\Bigr).
\end{equation}
We have $M_b(\omega) = -\omega$.  Also, we can write
\begin{equation}
  \int_\gamma \omega = \frac 1 2 \int_\Gamma \omega = -\pi i,
\end{equation}
where $\Gamma$ is the contour going clockwise around the cut.  This integral can be computed by deforming the contour to pick up the pole at infinity.

We conclude that
\begin{equation}
  (1 - M_b) F(x) = -2 \pi i \log \frac {\sqrt{x - a} + \sqrt{x - b}}{\sqrt{x - a} - \sqrt{x - b}},
\end{equation}
where the factor of $2$ comes from the doubling of the integral along the contour $\gamma_f$.  This new quantity has a singularity at $x = a$, in other words we have
\begin{equation}
  (1 - M_a) (1 - M_b) F(x) = 2 (1 - M_b) F(x) \neq 0.
\end{equation}

In contrast, (for $b > a$) the ``unitarity cut'' is given by
\begin{equation}
  \lim_{\epsilon \to 0^+} (F(x + i \epsilon) - F(x - i \epsilon)) = \theta(x - b) (-2 \pi i) \log \frac {\sqrt{x - a} + \sqrt{x - b}}{\sqrt{x - a} - \sqrt{x - b}}.
\end{equation}
Notice that the signs in the denominator of the argument of the logarithm have changed!  This quantity corresponds to \emph{another} difference
\begin{multline}
  \lim_{\epsilon \to 0^+} (F(x + i \epsilon) - F(x - i \epsilon)) =
  \int_{\gamma_{++}} \omega \circ \omega -
  \int_{\gamma_{--}} \omega \circ \omega = \\
  \int_\gamma (\omega \circ \omega - M_a(\omega) \circ M_a(\omega)) + \\
  \int_\gamma \omega \int_{\gamma_f} \omega - \int_\gamma M_a(\omega) \int_{\gamma_f} M_a M_b(\omega) + \\
  \int_{\gamma_f} (\omega \circ \omega - M_a M_b(\omega) \circ M_a M_b(\omega)) =
  2 \int_\gamma \omega \int_{\gamma_f} \omega,
\end{multline}
where we have used $M_a M_b(\omega) = \omega$.  The integral $\int_\gamma \omega = -\pi i$ can be computed in the same way as before while
\begin{equation}
  \int_{\gamma_f} \omega =
  \left. \log \frac {\sqrt{y - a} + \sqrt{y - b}}{\sqrt{y - a} - \sqrt{y - b}}\right\rvert_{y = b}^{y = x} =
    \log \frac {\sqrt{x - a} + \sqrt{x - b}}{\sqrt{x - a} - \sqrt{x - b}}.
\end{equation}

We end this section with some comments on contours.  In the analysis of sec.~\ref{sec:example} we assumed that the path from $a$ to $x$ encounters $b$ (above the threshold).  This assumes that $b > a$.  In a physics example we have $a = 0$ while $b = m^2$ and $b > a$ as long as $m \in \mathbb{R}$.  But if $m^2 < 0$, then we never encounter a threshold singularity for $x > 0$, if the contour is defined as a piecewise straight line.  This is the mechanism by which singularities get submerged to other sheets when they become non-physical, for example when they correspond to $\alpha < 0$.

\section{Second type Singularities via Inversion}
\label{sec:second_type_inversion}

In this section we describe the second type singularities in a way which is simpler and should allow various (type I) results to apply in this new case.

The idea is to invert the momenta and make a change of variables so that, roughly, $q_e \to \frac{q_e}{q_e^2}$ for each internal momentum in a loop. This way, we expose singularities that occur as pinches of the loop momenta at infinity. It turns out that doing this inversion in \emph{dual coordinates}~\cite{Okun:1960cls, Broadhurst:1993ib, Drummond:2007aua} makes the computations of second-type singularities particularly straightforward as we now show.

\paragraph{Second type singularities at one loop}
Assume that we have an $n$-point one-loop scalar integral
\begin{equation}
    \I_{\text{n-gon}} = \int_h \frac{d^D q_1}{\prod_{e=1}^{n} (q_e^2-m_e^2)} \,,
    \label{eq:I_n-gon}
\end{equation}
where we have allowed for any integration contour $h$ that avoids the singularities at $q_e^2=m_e^2$ for generic values of the external momenta, and thus dropped the explicit $i \varepsilon$'s from the denominators. We have set the numerator $N$ to $1$ for definiteness, but this can easily be relaxed.  

In dual coordinates, we label the loop with $x_0$, and put $q_e = x_e - x_0$ for each internal momentum.  The external masses are then given by $M_e^2=(x_e -x_{e+1})^2$, with the cyclic identification $x_{n+1} \equiv x_{1}$ understood. In these coordinates, the integral becomes
\begin{equation}
\I_{\text{n-gon}} = 
\int_{h'} \frac {d^D x_0}{\prod_{e = 1}^n [(x_e-x_{0})^2 - m_e^2]}.
\end{equation}
We want to study the singularities at infinity, so we do a change of variable to bring the infinity to the origin:
\begin{equation}
x_0 \to \frac {x_0}{x_0^2}.
\end{equation}
\begin{equation}
\I_{\text{n-gon}} = 
\int_{h'}
\frac {d^D x_0}{(x_0^2)^{D - n}} \frac 1 {\prod_{e = 1}^n  [1 - 2 x_0 \cdot x_e + x_0^2 (x_e^2 - m_e^2)]}.
\label{eq:n-gon_secondtype}
\end{equation}
We will assume that $D > n$.  Note that if $D = n$ then there is no $x_0^2$ in the denominator, which is consistent with the absence of second-type singularities for $n$-gons in $n$ dimensions. Moreover, if $D < n$ the $x_0^2$ factor appears in the numerator instead and can not contribute to a pinch.  If $D > n$ we also need to do a blow-up to resolve the singularity at $x_0 = 0$.

Next, we can look at the singularities of the differential form appearing in~\eqref{eq:n-gon_secondtype}, just like we do for usual non-inverted momentum-space integrands. Compared to first-type singularities, we now have an extra denominator factor of $x_0^2$, which acts like a new propagator that can take part in the pinch. The condition for the pinch becomes the existence of $\alpha_0, \alpha_1, \dotsc, \alpha_n$ not all vanishing such that
\begin{equation}
\alpha_0 d x_0^2 + \sum_{e = 1}^n \alpha_e d \left[1 - 2 x_0 \cdot x_e + x_0^2 (x_e^2 - m_e^2) \right] = 0,
\end{equation}
where we set $d x_e = 0$ and $d m_e = 0$.  This means that
\begin{equation}
\sum_{e = 1}^n \alpha_e x_e = x_0 \Bigl[\alpha_0 + \sum_{e = 1}^n \alpha_e (x_e^2 - m_e^2)\Bigr].
\end{equation}
Since we are looking for singularities of second type, we take $x_0^2 = 0$ to be one of the
``on-shell'' conditions. Then, this equation implies that we can form a linear combination of dual coordinates $x_i$ which is equal to a null vector.  In turn, this means that the Gram matrix of these vectors vanishes.

These kinds of conditions occur in ref.~\cite{doi:10.1063/1.1724262} by Fairlie, Landshoff, Nuttall and Polkinghorne (see also the discussion in ref.~\cite[sec.~2.10]{ELOP}).  Their formulations are not straightforward, and involve notions like ``super-pinch''.  The other advantage of our method becomes obvious if we include numerators.  Then the inversion will generate a denominator which depends on the numerator we have and this brings about its own shift in the Landau exponent via Landau--Leray--Pham's asymptotic formulas.

Another advantage is that once we do an inversion we can in principle compute a cut \`a la Cutkosky, by simply replacing the propagators which participate in the pinch by delta functions.

\paragraph{Second type singularities of higher-loop integrals}

Higher loops can be analyzed in a similar way but there are several patches we need to consider (which correspond to mixed first type and second type singularities).  In that case we need to study the possibilities where we invert in all possible or only a subset of the loops.  The fact that there can be multiple singularities at infinity was also described in the older literature (see ref.~\cite{Greenman:1969cj}).

As an example, we take the scalar ice-cream cone integral,
\begin{figure}
  \centering
  \includegraphics{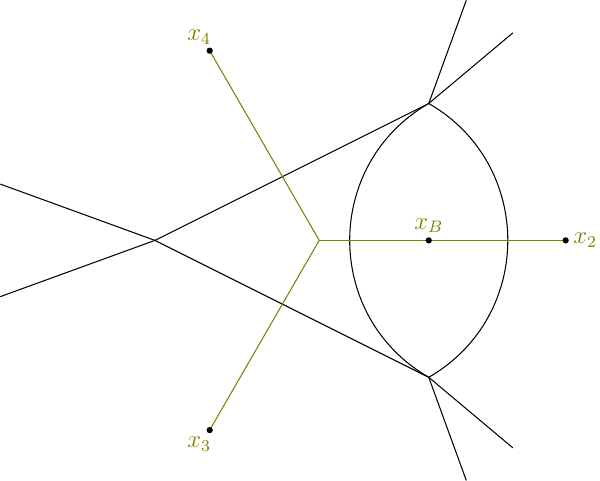}
  \caption{The ice-cream cone integral and its dual space.}
  \label{fig:ice-cream}
\end{figure}
which we write in dual coordinates as
\begin{equation}
    \I_{\text{ice}} = \int_h \frac{d^D x_A d^D x_B}{[(x_A{-}x_B)^2{-}m_1^2][(x_2{-}x_B)^2{-}m_2^2][(x_3{-}x_A)^2{-}m_3^2][(x_4{-}x_A)^2{-}m_4^2]} \,,
\end{equation}
where $h$ is an integration contour that avoids the singularities of the denominators for generic values of $x_2$, $x_3$ and $x_4$.
To capture the first-type singularities of this integral, we solve the Landau equations as usual, and we will not go through that here.

For second-type singularities, we invert the dual-coordinate loop momenta $x_A$ and $x_B$, both separately and at the same time. The first case we look at is when $x_B$ is inverted, but $x_A$ is not. That is, we set
\begin{equation}
    y_B = \frac{x_B}{x_B^2} \,.
\end{equation}
Then,
\begin{align}
    \I_{\text{ice}} & = \int_{h'} \frac{d^D x_A d^D y_B}{(y_B^2)^{D-2}[1{-}2 x_A {\cdot} y_B {+}(x_A^2{-}m_1^2)y_B^2] [1{-}2 x_2 {\cdot} y_B {+}(x_2^2{-}m_1^2)y_B^2]}
    \\ & \hspace{6cm} \times \frac{1}{[(x_3{-}x_A)^2{-}m_3^2][(x_4{-}x_A)^2{-}m_4^2]} \,,
\end{align}
where $h'$ is the contour corresponding to $h$ in the new variables.
We now solve the leading Landau equations of this expression, that is, assuming that all denominator factors are ``on shell'', including the condition $y_B^2=0$. The Landau equations then consist of the on-shell conditions
\begin{align}
    \label{eq:onshell}
    y_B^2=0 \,, \qquad 1{-}2 x_A {\cdot} y_B = 0\,, \qquad 1{-}2 x_2 {\cdot} y_B = 0\,, \\
    (x_3-x_A)^2 = m_3^2 \,, \qquad (x_4-x_A)^2 - m_4^2 \,,
    \nonumber
\end{align}
as well as the loop equations in $x_A$ and $y_B$, which say that we can find $\alpha_e$ not all equal to zero such that
\begin{subequations}
\begin{align}
    \alpha_0 \, y_B -\alpha_1 \left[-x_A + (x_A^2-m_1^2) y_B \right] + \alpha_2 \left[-x_2 + (x_2^2-m_1^2) y_B\right] & = 0
    \label{eq:loop_icecream_1}
    \\
    - \alpha_1 \, y_B + \alpha_3 (x_A-x_3) + \alpha_4 (x_A-x_4) & = 0
    \label{eq:loop_icecream_2}
\end{align}
\end{subequations}
Dotting the first equation with $y_B$ and using the on-shell conditions from~\eqref{eq:onshell} gives $\alpha_1=-\alpha_2$. Plugging it back into~\eqref{eq:loop_icecream_1} gives $\alpha_2 (x_A-x_2)^2 \propto y_B$, so
\begin{equation}
    \alpha_2^2 (x_A-x_2)^2 = 0 \,.
    \label{eq:extra_zero_cond}
\end{equation}
Next, dotting~\eqref{eq:loop_icecream_1} with $\langle x_2 x_3 x_4 \cdot $ (in the notation of Appendix~\ref{sec:landau-geometric}) gives
\begin{equation}
    \langle x_2 x_3 x_4 y_B \rangle = 0 \,.
    \label{eq:yBzero}
\end{equation}
Dotting $\langle x_2 x_3 x_4 \cdot $ into~\eqref{eq:loop_icecream_2} then gives
\begin{equation}
    (\alpha_3+\alpha_4) \langle x_2 x_3 x_4 x_A \rangle = 0 \,,
\end{equation}
where we have used~\eqref{eq:yBzero}. We thus find a solution to the Landau equations for which
\begin{equation}
    \langle x_2 x_3 x_4 x_A \rangle = 0\,.
\end{equation}
We can use the on-shell conditions $(x_3-x_A)^2 = m_3^2$, $(x_4-x_A)^2 = m_4^2$, as well as $ (x_2-x_A)^2 =0$ from~\eqref{eq:extra_zero_cond} to compute
\begin{equation}
    \langle x_2 x_3 x_4 x_A \rangle =
    \frac{1}{4} \left[-m_3^2 \left(m_4^2-p_2^2\right){}^2+\left(m_3^2+m_4^2-p_1^2\right) \left(m_3^2-p_3^2\right) \left(m_4^2-p_2^2\right)-m_4^2 \left(m_3^2-p_3^2\right)^2\right] \,.
\end{equation}
The last thing to check is that this solution actually leads to a sensible solution for the remaining $\alpha_0, \alpha_1, \ldots \alpha_4$, as well as the loop momenta $x_A$ and $y_B$, with no further restrictions on the kinematics.

One can study non-planar integrals by following the recipe sketched in Appendix~\ref{sec:landau-geometric} (see fig.~\ref{fig:nonplanar}).

\section{Acknowledgments}

It is a pleasure to thank H.~Hannesd\'ottir, A.~McLeod and M.~D.~Schwartz for initial collaboration on this paper and for many discussions about Landau singularities.  I am also grateful to J.~Bourjaily and J.~Collins for discussions and advice.

\appendix

\section{A geometric approach to solving Landau equations}
\label{sec:landau-geometric}

In this section we outline a geometric approach to solving the Landau equations.  In general cases, some complications may arise which, for lack of space, we can not describe here in detail.  We only describe here the bare minimum required to describe the solution in sec.~\ref{sec:second_type_inversion}.

For a triangle integral with external momenta $p_1, p_2, p_3$ and internal momenta $q_1, q_2, q_3$ we have, after going to dual space,
\begin{gather}
    p_1 = x_2 - x_3, \qquad
    p_2 = x_3 - x_1, \qquad
    p_3 = x_1 - x_2,
\end{gather}
\begin{gather}
    q_1 = x_A - x_1, \qquad
    q_2 = x_A - x_2, \qquad
    q_3 = x_A - x_3,
\end{gather}
with $p_i^2 = M_i^2$ and $q_i^2 = m_i^2$.

The Landau loop equation is
\begin{equation}
    \alpha_1 q_1 + \alpha_2 q_2 + \alpha_3 q_3 = 0,
\end{equation}
or in dual language
\begin{equation}
    \alpha_1 x_1 + \alpha_2 x_2 + \alpha_3 x_3 = (\alpha_1 + \alpha_2 + \alpha_3) x_A.
\end{equation}
Then we have $\langle 1 2 3 A\rangle = 0$, where the bracket means the signed (or oriented) volume of the simplex with vertices $1, 2, 3, A$, defined as follows.  For an $n$-simplex with vertices $v_1, \dotsc, v_{n + 1}$ and components $v_i = (v_i^0, \dotsc, v_i^{n - 1})$, we define\footnote{We can use inversion for one of the vectors $v_j$ which sends it to $i(v_j) = \frac {v_j}{v_j^2}$.  Then
\[
  \langle v_1, \dotsc, i(v_j), \dotsc, v_n\rangle =
  \frac 1 {(v_j)^2}
  \begin{pmatrix}
    1      & \hdots & (v_j)^2& \hdots & 1 \\
    v_1^0  & \hdots & v_j^0  & \hdots & v_n^0 \\
    \vdots & \vdots & \vdots & \vdots & \vdots \\
    v_1^{n - 1}& \hdots & v_j^{n - 1} & \hdots & v_n^{n - 1}
  \end{pmatrix}.
\]}
\begin{equation}
  \langle v_1, \dotsc, v_{n + 1}\rangle = \det
  \begin{pmatrix}
    1 & 1 & \hdots & 1 \\
    v_1^0 & v_2^0 & \hdots & v_n^0 \\
    \vdots & \vdots & \hdots & \vdots \\
    v_1^{n - 1} & v_2^{n - 1} & \hdots & v_n^{n - 1}
  \end{pmatrix}.
\end{equation}

Alternatively, we can think about lifting these vector identities to the Clifford algebra.  Then instead of brackets we have products in the Clifford algebra followed by picking the degree three parts in the Clifford algebra grading.  If the solutions of the Landau equations can be written using spinors, that would explain the usefulness of momentum twistors and would point the way to their generalization.

Squaring, we find\footnote{This holds for Euclidean signature.  In signature $(p, q)$ with $p + q = n$ and $p$ negative eigenvalues and $q$ positive eigenvalues we have instead
\[
\epsilon(v_1, \dotsc, v_n) \epsilon(w_1, \dotsc, w_n) = (-1)^p \det (v_i \cdot w_j)_{i, j = 1, \dotsc, n}.
\]
Indeed, this identity can be proved by showing that the left and right hand sides have a correct multi-linearity property in the vectors $v$ and $w$, the correct full antisymmetry under permutations of $v$ and $w$ (which amount to permutations of rows and columns in the determinant).  Finally, the numeric coefficient is determined by evaluating both the left and right-hand sides on orthonormal bases.  This is how the extra $(-1)^p$ arises.}
\begin{multline}
    \langle 1 2 3 A\rangle^2 =
    \epsilon(x_{2 1}, x_{3 1}, x_{A 1})^2 = \\
    \det \begin{pmatrix}
        M_3^2 & -\frac{1}{2} (M_1^2 - M_2^2 - M_3^2) & -\frac{1}{2} (m_2^2 - m_1^2 - M_3^2) \\
        & M_2^2 & -\frac{1}{2} (m_3^2 - M_2^2 - m_1^2) \\
        & & m_1^2
    \end{pmatrix} = \\
    -m_1^2 m_2^2 m_3^2 (2 y_{12} y_{13} y_{23} + y_{12}^2 + y_{13}^2 + y_{23}^2 - 1),
\end{multline}
where $y_{12} = \frac{M_3^2 - m_1^2 - m_2^2}{2 m_1 m_2}$ and cyclic permutations.

It can be checked explicitly by area decomposition (see sec.~\ref{sec:volume_decomposition}) that $\langle 1 2 3\rangle = \langle A 2 3\rangle + \langle 1 A 3\rangle + \langle 1 2 A\rangle$.  If we plug the Landau loop equation in dual language in $\langle \cdot 2 3\rangle$ we find $\alpha_1 \langle 1 2 3\rangle = \langle A 2 3\rangle$ and therefore
\begin{gather}
    \alpha_1 = \frac{\langle A 2 3\rangle}{\langle 1 2 3\rangle}, \qquad
    \alpha_2 = \frac{\langle 1 A 3\rangle}{\langle 1 2 3\rangle}, \qquad
    \alpha_3 = \frac{\langle 1 2 A\rangle}{\langle 1 2 3\rangle}.
\end{gather}
Indeed then we have $\alpha_1 + \alpha_2 + \alpha_3 = 1$.

The case $\alpha_1 + \alpha_2 + \alpha_3 = 0$ is special; indeed, contracting the Landau loop equation with $\langle \cdot 2 3\rangle$ we obtain $\alpha_1 \langle 1 2 3\rangle = (\alpha_1 + \alpha_2 + \alpha_3) \langle A 2 3\rangle$ and $\alpha_1 + \alpha_2 + \alpha_3 = 0$ implies that $\langle 1 2 3\rangle = 0$.  Geometrically, this means that the triangle formed of external momenta is degenerate.

We have
\begin{gather}
    \langle 1 2 3\rangle^2 = -\frac{1}{4} (M_1^4 + M_2^4 + M_3^4 - 2 M_1^2 M_2^2 - 2 M_1^2 M_3^2 - 2 M_2^2 M_3^2), \\
    \langle A 2 3\rangle^2 = -m_2^2 m_3^2 (y_{23}^2 - 1), \\
    \langle 1 A 3\rangle^2 = -m_1^2 m_3^2 (y_{13}^2 - 1), \\
    \langle 1 2 A\rangle^2 = -m_1^2 m_2^2 (y_{12}^2 - 1).
\end{gather}

We now discuss briefly, on an example, some special features that arise in non-planar cases.  Consider a two-loop non-planar integral as in fig.~\ref{fig:nonplanar}.

\begin{figure}
    \centering
    \includegraphics{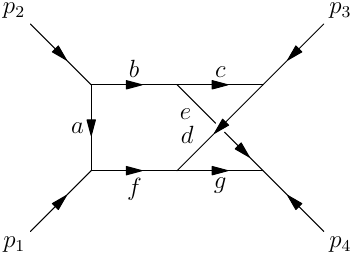}
    \caption{A two-loop non-planar integral.}
    \label{fig:nonplanar}
\end{figure}

Momentum conservation reads
\begin{gather}
    p_2 - a - b = 0, \qquad
    b - c - e = 0, \\
    p_1 + a - f = 0, \qquad
    d + f - g = 0, \\
    p_3 + c - d = 0, \qquad
    p_4 + e + g = 0.
\end{gather}

We introduce a dual space such that
\begin{gather}
    p_1 = x_2 - x_1, \qquad
    p_2 = x_3 - x_2, \\
    p_3 = x_4 - x_3, \qquad
    p_4 = x_1 - x_4, \\
    f = x_A - x_1, \qquad
    a = x_A - x_2, \\
    b = x_3 - x_A, \qquad
    c = x_3 - x_B, \\
    d = x_4 - x_B, \qquad
    e = x_B - x_A, \\
    g = x_B' - x_1
\end{gather}
with the constraint that $x_B' - x_A = x_4 - x_B$.  This is a difference with respect to the planar case; we now have two dual points $x_B$ and $x_B'$ corresponding to one loop.

\begin{figure}
    \centering
    \includegraphics{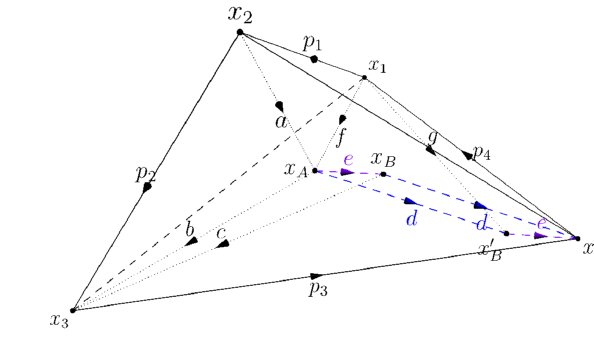}
    \caption{The non-planar integral in fig.~\ref{fig:nonplanar} in dual space.  Since the integral is non-planar there are multiple ways to ``solve'' the momentum conservation constraint.  In this example they lead to the blue and purple parallelogram.  Therefore, we have \emph{two} dual points $x_B$ and $x_B'$ corresponding to one loop.}
    \label{fig:non-planar-dual}
\end{figure}

\section{Volume decomposition}
\label{sec:volume_decomposition}

The area of a triangle can be decomposed as
\begin{equation}
    \label{eq:area-decomposition}
    \langle 1 2 3\rangle = \langle A 2 3\rangle + \langle 1 A 3\rangle + \langle 1 2 A\rangle.
\end{equation}
Since these are oriented areas we do not need to impose that the point $x_A$ is inside the triangle.  We can check this decomposition by computing the boundaries
\begin{gather}
    \partial (1 2 3) = (2 3) - (1 3) + (1 2), \\
    \partial (A 2 3) = (2 3) - (A 3) + (A 2), \\
    \partial (1 A 3) = (A 3) - (1 3) + (1 A), \\
    \partial (1 2 A) = (2 A) - (1 A) + (1 2),
\end{gather}
so $\partial (1 2 3) = \partial ((A 2 3) + (1 A 3) + (1 2 A))$.

A problem one encounters often goes as follows.  Consider two triangles $(1 2 3)$ and $(1 2 4)$ which share a side $(1 2)$.  Suppose known all the sides of these two triangles.  What are the possible values of the length $(3 4)$?

One way to do this calculation involves angles; we can compute the angles $\angle(1 2 3)$ and $\angle(1 2 4)$, then $\angle(3 2 4) = \angle (1 2 3) \pm \angle(1 2 4)$.  However, this method fails if one of the sides is null.  Such situations do not arise in Euclidean geometry, but can arise in Lorentzian geometry.  Volumes are better behaved than angles in Lorentzian geometry so we will use them instead.

We impose that the volume of the tetrahedron $(1 2 3 4)$ vanishes
\begin{multline}
    -x_{12}^2 (x_{34}^2)^{2} -
    \Bigl((x_{12}^2)^{2} -
    x_{12}^2 x_{13}^2 -
    x_{12}^2 x_{14}^2 +
    x_{13}^2 x_{14}^2 -
    x_{12}^2 x_{23}^2 -
    x_{14}^2 x_{23}^2 -\\
    x_{12}^2 x_{24}^2 -
    x_{13}^2 x_{24}^2 +
    x_{23}^2 x_{24}^2\Bigr) x_{34}^2
   -x_{12}^2 x_{13}^2 x_{23}^2 +
    x_{12}^2 x_{14}^2 x_{23}^2 +
    x_{13}^2 x_{14}^2 x_{23}^2 -\\
    (x_{14}^2)^{2} x_{23}^2 -
    x_{14}^2 (x_{23}^2)^{2} +
    x_{12}^2 x_{13}^2 x_{24}^2 -
    (x_{13}^2)^{2} x_{24}^2 -
    x_{12}^2 x_{14}^2 x_{24}^2 +\\
    x_{13}^2 x_{14}^2 x_{24}^2 +
    x_{13}^2 x_{23}^2 x_{24}^2 +
    x_{14}^2 x_{23}^2 x_{24}^2 -
    x_{13}^2 (x_{24}^2)^{2}= 0.
\end{multline}
This is a second order equation in $x_{34}^2$ and therefore we obtain two solutions, as expected.

A more complicated situation is that of a pentagon obtained by gluing three triangles $(1 2 3)$, $(1 2 4)$ and $(2 3 5)$ with known side lengths.  Then we want to find the length $(4 5)$.  Obviously this length has \emph{four} possible values, which can be obtained by reflecting the point $4$ in the side $(1 2)$ and the point $5$ in the side $(2 3)$.

To find this distance we proceed in steps.  First, we find the possible values of the length $(3 4)$ as before.  Next, we apply the same result to the triangles $(2 3 4)$ and $(2 3 5)$.

A similar situation arises for the double box integral.  In that case we have tetrahedra instead and we can build other tetrahedra over two faces of an initial tetrahedron.  Then we want to impose a constraint on the distance between two vertices of these tetrahedra.

The volume of the tetrahedron can be decomposed as
\begin{equation}
    \label{eq:decomposition-dbox}
    \langle 1 2 3 4\rangle = \langle A 2 3 4\rangle + \langle 1 A 3 4\rangle + \langle 1 2 A 4\rangle + \langle 1 2 3 A\rangle.
\end{equation}
Since these are \emph{oriented} volumes we do not need to impose the condition that $x_A$ must be inside the tetrahedron for this formula to hold.

We can check that this is the correct decomposition by applying the boundary operator and checking that the ``internal'' boundaries cancel.  For example
\begin{gather}
    \partial (1 2 3 4) = (2 3 4) - (1 3 4) + (1 2 4) - (1 2 3), \\
    \partial (A 2 3 4) = (2 3 4) - (A 3 4) + (A 2 4) - (A 2 3), \\
    \partial (1 A 3 4) = (A 3 4) - (1 3 4) + (1 A 4) - (1 A 3), \\
    \partial (1 2 A 4) = (2 A 4) - (1 A 4) + (1 2 4) - (1 2 A), \\
    \partial (1 2 3 A) = (2 3 A) - (1 3 A) + (1 2 A) - (1 2 3).
\end{gather}
Then we have
\[
\partial ((A 2 3 4) + (1 A 3 4) + (1 2 A 4) + (1 2 3 A)) = \partial (1 2 3 4).
\]

\bibliographystyle{jhep}

\bibliography{Refs}

\end{document}